\DeclareMathOperator*{\argmax}{arg\,max}
\newtheorem{theorem}{Theorem}[section]
\newtheorem{definition}[theorem]{Definition}
\newtheorem{proposition}[theorem]{Proposition}
\newtheorem{lemma}[theorem]{Lemma}
\newcommand{\dist}{\mathrm{dist}}
\newcommand{\calI}{{\cal I}}
\newcommand{\calD}{{\cal D}}
\def\blinear#1{\overline{L(#1)}}
\newcommand{\M}[1]{\mathbf{#1}}
\newcommand{\V}[1]{\boldsymbol{\mathit{#1}}}
\newcommand{\PiPi}{\boldsymbol{\mathit{\Pi}}}
\newcommand{\ppi}{\boldsymbol{\mathit{\pi}}}
\def\LV{L(V)}
\def\bLV{\overline{L(V)}}
\def\B3CT{B$^3$CT}
\def\calC{\mathcal{C}}
\def\PPR{\mathbf{PPR}}
\def\PR#1{\mathbf{PageRank}_#1}
\def\cpr#1#2{p_{#1\rightarrow #2}}
\def\calR{\cal R}
\def\ppi{\boldsymbol{\mathit{\pi}}}
\def\GGamma{\boldsymbol{\mathit{\Gamma}}}
\def\00{\mathbf{0}}
\def\11{\mathbf{1}}
\def\cc{\mathbf{c}}
\def\dd{\mathbf{d}}
\def\ff{\mathbf{f}}
\def\hh{\mathbf{h}}
\def\pp{\mathbf{p}}
\def\ss{\mathbf{s}}
\def\uu{\mathbf{u}}
\def\ww{\mathbf{w}}
\def\vv{\mathbf{v}}
\def\xx{\mathbf{x}}
\def\yy{\mathbf{y}}
\def\zz{\mathbf{z}}
\def\DD{{\bf D}}
\def\FF{{\bf F}}
\def\II{{\bf I}}
\def\LL{{\bf \mathcal{L}}}
\def\LLL{{\bf \mathbf{L}}}
\def\MM{{\bf M}}
\def\UU{{\mathbf U}}
\def\WW{{\bf W}}
\newcommand{\overbar}[1]{\mkern 1.5mu\overline{\mkern-1.5mu#1\mkern-1.5mu}\mkern 1.5mu}
\def\Gbar{\overbar{G}}
\def\Ebar{\overbar{E}}
\def\WWbar{\overbar{\WW}}
\def\conductance{{\rm conductance}}
\def\pagerankconductance{\mbox{\rm PageRank-conductance}}
\def\pagerankutility{\mbox{\rm PageRank-utility}}
\def\pageranclusterability{\mbox{\rm PageRank-clusterability}}
\def\SCF{{\rm clusterability}}
\def\centrality{{\rm centrality}}
\def\bvec#1{{\mbox{\boldmath $#1$}}}
\def\calK{{\bf  \mathcal{K}}}
\def\calM{{\bf  \mathcal{M}}}
\def\calF{{\bf  \mathcal{F}}}
\def\expec#1#2{\mbox{\rm E}_{#1}\left[ #2 \right]}
\def\prob#1#2{\mbox{\rm Pr}_{#1}\left[ #2 \right]}
\def\nnz{\mbox{\rm nnz}}
\newdimen\pIR
\newcommand\R{{\rm I\kern\pIR R}}
\def\Reals#1{\mathbb{R}^{#1}}
\def\text#1{\mbox{#1}}
\begin{document}
%\title{Network Essence: Basic Algebraic Properties of Personalized PageRank Matrices and What They May Inspire}

\title{Network Essence: PageRank Completion and Centrality-Conforming Markov Chains}

\author{Shang-Hua Teng\thanks{Supported in
  part by a Simons Investigator Award from the Simons Foundation
  and by NSF grant CCF-1111270.
I would like to thank my Ph.D. student 
  Alana Shine and the three anonymous referees for valuable feedback.}\\ Computer Science and Mathematics\\  USC
}

\maketitle

\begin{abstract}
Ji\v{r}\'{i} Matou\v{s}ek (1963-2015) had many breakthrough contributions
 in mathematics and algorithm design.
%omputational geometry, optimizaiton, and --- more generally ---   mathematics.
His milestone results are not only profound  but also %mathematically 
  elegant.
By going beyond the original objects --- 
  such as Euclidean spaces or linear programs ---
  Jirka found the {\em essence} of 
  the challenging mathematical/algorithmic problems 
  as well as beautiful solutions
  that were natural to him, but were surprising discoveries to the field.

In this short exploration article, 
  I will first share with readers my initial
  encounter with Jirka and discuss one of
  his fundamental geometric results from the early 1990s.
In the age of social and information networks, 
  I will then turn the discussion from 
  geometric structures to network structures,
  attempting to take a humble step towards the holy grail of network science,
  that is to understand the {\em network essence} 
  that underlies the observed sparse-and-multifaceted network data.
I will discuss a simple result which summarizes some basic 
   algebraic properties of {\em personalized PageRank matrices}.
Unlike the traditional transitive closure of binary relations, 
  the personalized PageRank matrices take ``accumulated Markovian closure'' 
  of network data. 
Some of these algebraic properties are known in various contexts.
But I hope featuring them together in a broader context
   will help to illustrate
  the desirable properties of this Markovian completion of networks,
  and motivate systematic developments of a network theory
  for understanding vast and ubiquitous multifaceted network data.
%In the context of network analysis,
% for {\em network completion}, 
%aethstetic 
\end{abstract}

%\section{Introduction}

%\subsection{Geometric Essence: To the Memory of Ji\v{r}\'{i} Matou\v{s}ek (1963-2015)}
\section{Geometric Essence: To the Memory of Ji\v{r}\'{i} Matou\v{s}ek}

Like many in theoretical computer science and discrete mathematics,
  my own research has benefited from Jirka's deep insights, 
  especially into computational geometry \cite{JM}
  and linear programming \cite{JMLP}.
In fact, our paths accidentally crossed 
  in the final year of my Ph.D. program.
As a part of my 1991 CMU thesis \cite{TengThesis}, 
I obtained a result on the deterministic computation of 
  a geometric concept, called centerpoints, which led me to
  learn about one of Jirka's groundbreaking results during
  this time.

\subsection{Centerpoints}
% and High Dimensionsal Divide-and-Conquer}

The median is a widely-used concept for analyzing one-dimensional data,
  due to its statistical robustness
  and its natural algorithmic applications to divide-and-conquer.
In general, suppose $P = \{p_1,...,p_n\}$ is a set of $n$ real numbers.
For $\delta \in (0,1/2]$, we call $c\in \Reals{}$ 
  a {\em $\delta$-median} of $P$
  if $\max\left(|\{i : p_i< c\}|,|\{j : p_j> c\}|)\leq (1-\delta\right) n.$
A $\frac{1}{2}$-median of $P$ is known simply as a {\em median}.
Centerpoints are high-dimensional generalization of medians:

\begin{definition}[Centerpoints]\label{def:centerpoints}
Suppose $P = \{\pp_{1},\ldots,\pp_{n}\}$ is a point set 
  in $\Reals{d}$. 
For $\delta \in (0,1/2]$, a point $\cc\in \Reals{d}$
  is a {\em $\delta$-centerpoint} of $P$ if 
  for all unit vectors $\zz\in \Reals{d}$,
the projection $\zz^{T}\cc$ is a {\em $\delta$-median}
 of the projections, $\zz^T\cdot P = \{\zz^T\pp_{1},\ldots,\zz^T\pp_{n}\}$.
\end{definition}

Geometrically, every hyperplane $\hh$ in $\Reals{d}$ divides 
 the space into two open halfspaces,
 $\hh^+$ and $\hh^-$. %:   $P^+ = \hh^+ \cap P$, $P^- = \hh^- \cap P$, and $P\cap \hh$. 
Let the {\em splitting ratio} of $\hh$ over $P$,
denoted by $\delta_\hh(P)$, be:
\begin{eqnarray}
\delta_\hh(P) := \frac{\max\left(|\hh^+\cap P|,|\hh^-\cap P|\right)}{|P|}
\end{eqnarray}
Definition \ref{def:centerpoints}
  can be restated as:
  $\cc\in \R^d$ is a $\delta$-centerpoint of $P$
  if the splitting ratio of every hyperplane $\hh$ passing through $\cc$ 
  is at most $(1-\delta)$.
Centerpoints are fundamental to geometric divide-and-conquer 
   \cite{EdelsbrunnerBook}.
They are also strongly connected  to the concept of regression 
  depth introduced by Rousseeuw and 
  Hubert in robust statistics  \cite{RousseeuwHubert,AmentaBernEppstinTeng}.

We all know that every set of real numbers has a median.
Likewise --- and remarkably ---
  every point set in $d$-dimensional Euclidean space
  has a $\frac{1}{d+1}$-centerpoint
 \cite{DanzerFonluptKleeHelly}.
This mathematical result can be established by 
  Helly's  classical theorem from convex 
  geometry.\footnote{Helly's Theorem states: 
  Suppose $\calK$ is a family of at least $d+1$ convex sets
  in $\R^d$, and $\calK$ is finite or each member  
  of $\calK$ is compact.  
Then, if each $d+1$ members of $\calK$ 
  have a common point, there must be a point
  common to all members of $\calK$.}
Algorithmically,
  Vapnik-Chervonenkis' celebrated sampling theorem \cite{VC71} (more below)
  implies an efficient randomized algorithm  --- at least in theory --- 
  for computing a $(\frac{1}{d+1}-\epsilon)$-centerpoint.
This ``simple'' algorithm first takes a ``small''random sample, and then
  obtains its $\frac{1}{d+1}$-centerpoint via linear programming.
The complexity of this LP-based sampling algorithm is:
$$2^{O(d)} \left( \frac{d}{\epsilon^{2}}\cdot \log 
\frac{d}{\epsilon}\right)^d.$$

\subsection{Derandomization}

For my thesis,  I needed to compute centerpoints 
  in order to construct geometric separators \cite{MTTVJACM}
  for supporting finite-element simulation and
   parallel scientific computing \cite{MTTVSIAM}.
Because linear programming was too slow, I needed 
  a practical centerpoint algorithm to run large-scale 
  experiments \cite{GilbertMillerTeng}.
Because I was a theory student, I was also aiming
   for a theoretical algorithm to enrich my thesis.
For the latter, I focused on derandomization, which was then
  an active research area in theoretical computer science.
For centerpoint approximation without linear programming, 
  my advisor Gary Miller and I quickly obtained a simple 
  and practical algorithm\footnote{The construction started 
  as a heuristics, but it took a few more 
  brilliant collaborators  and years (after my graduation) to rigorously 
  analyzing its performance \cite{CEMST92}.} 
  based on Radon's classical theorem\footnote{Radon's Theorem states: Every point set $Q\subset\R^d$ with $|Q| \geq d + 2$ can be 
 partitioned into two subsets  $(Q_1,Q_2)$ such that 
 the convex hulls of $Q_1$ and $Q_2$ have a common point.
}
  \cite{DanzerFonluptKleeHelly}.
But for derandomization, it took me more than a year to finally design 
  a deterministic linear-time algorithm 
  for computing $(\frac{1}{d+1}-\epsilon)$-centerpoints 
  in any fixed dimensions.
It happened in the Spring of 1991, my last semester at CMU.
Gary then invited me to accompany him 
  for a month-long visit, starting at the spring break of 1991,
  at the International Computer Science Institute (ICSI),
  located near the U.C. Berkeley campus.
During the  California visit, 
I ran into Leo Guibas, 
  one of the pioneers of computational geometry.

After I told Leo about my progress on  Radon-Tverberg 
  decomposition \cite{Tverberg66} and centerpoint computation,
  he mentioned to me a paper by Jirka \cite{JM},
  which was just accepted to the
  {\em ACM Symposium on Theory of Computing} (STOC 1991) 
  ---  before my solution  ---  that beautifully solved the 
  sampling problem for a broad class of
  computational geometry and statistical learning problems.
% in any abstract range spaces   with fixed VC dimensions.
Jirka's result --- see Theorem \ref{theo:JM} below --- 
  includes the approximation of centerpoints as a simple special case.
Although our approaches had some ideas in common, 
  I instantly knew that this mathematician ---
  who I later learned was just a year older than me ---
  was masterful and brilliant. 
I shortened that section of my thesis
 by referring readers to Jirka's paper \cite{JM},
 %  and stated that this was all known from Matousek, 
  and only included the scheme I had
  that was in common with his bigger result.
%, with the family of halpspaces  as the range space:

\begin{figure}[h]
\centering
 \includegraphics[width=0.80\linewidth]{./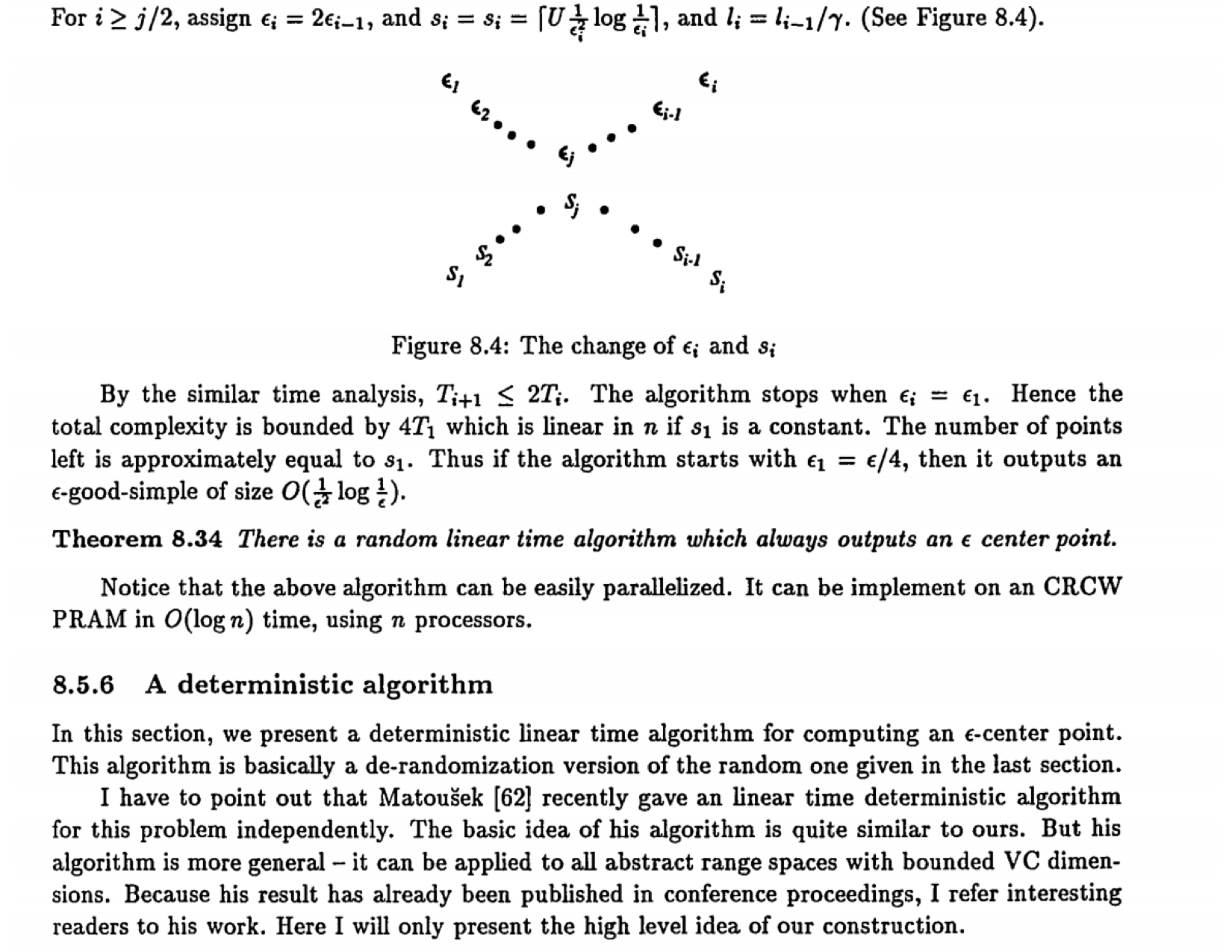}
 \caption{Page 66 (Chapter 8) of my thesis}
        \label{fig:epsilon}
\end{figure}

\subsection{Matou\v{s}ek's Theorem: The Essence of Dimensionality}
Mathematically, a {\em range space} $\Sigma$ is a pair $( X , \calR )$, 
  where $X$ is a finite or infinite set,
  and $\calR$ is a finite or infinite family of subsets of $X$. 
Each $H\in \calR$ can be viewed as a classifier of $X$, with elements
  in $X \cap H$ as its positive instances.
For example, $\Reals{d}$ and its halfspaces form a range space,
  so do $\Reals{d}$ and its $L_p$-balls, for any $p> 0$, 
  as well as $V$ and the set of all cliques in a graph $G = (V,E)$.
Range spaces greatly extend the concept of {\em linear separators}.

An important technique in statistical machine learning and computational geometry
  is sampling.
For range spaces, we can measure the quality of a sample as the following:
\begin{definition}[$\epsilon$-samples]\label{def:sample}
Let $\Sigma = (X,\calR)$ be an $n$-point range space.
A subset $S \subseteq X$ is an {\em $\epsilon$-sample} 
  or {\em $\epsilon$-approximation} for $\Sigma$ if for all $H\in \calR$: 
\begin{eqnarray}
\left| \frac{|H\cap S|}{|S|} - \frac{|H\cap X|}{|X|}\right| \leq \epsilon
\end{eqnarray}
\end{definition}

For each $S \subseteq X$, the set of distinct classifiers
   that $\calR$ can define is ${\calR}(S) = \{H\cap S: H\in {\calR}\}$.
For any $m\leq |X|$, let the {\em  shatter function} for $\Sigma$ be:
\begin{eqnarray}
\pi_{\calR}(m) = \sup_{S\subseteq X, |S| = m} \left|{\calR}(S)\right|
\end{eqnarray}

\begin{theorem}[Deterministic Sampling -- Matou\v{s}ek]\label{theo:JM}
Let $\Sigma = (X, \calR)$
   be an $n$-point range space 
    with the shatter function satisfying 
$\pi_{\calR}(m) = O(m^d)$ ($d\geq 1$ a constant). 
Having a subspace oracle for $\Sigma$, and given a parameter $r$, we can 
  deterministically compute a $(1/r)$-approximation of size
  $O(dr^2\log r)$ for $\Sigma$, in time $O(n(r^2\log r)^d)$.
\end{theorem}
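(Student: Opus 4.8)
The plan is to build the $(1/r)$-approximation by \emph{iterated halving}: start with $S_0 = X$, repeatedly replace the current set by a subset of (essentially) half the size whose empirical range-frequencies $|H\cap S|/|S|$ are almost unchanged, and stop once the size has dropped to $N = \Theta(dr^2\log r)$. The whole argument rests on two things: a one-shot ``halving'' subroutine with a good discrepancy guarantee, and a schedule of halvings whose \emph{accumulated} error stays below $1/r$ while the total work is only $O(n)$ times a function of $r$.

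\textbf{The halving subroutine.} For a set $A\subseteq X$, the ranges restricted to $A$ give at most $\pi_{\calR}(|A|)=O(|A|^d)$ distinct sets $\calR(A)=\{H\cap A: H\in\calR\}$. I would take a uniformly random \emph{balanced} two-coloring $\chi\colon A\to\{+1,-1\}$ and let $B$ be the $+1$ class, so $|B|=\lceil|A|/2\rceil$; then $\big||H\cap B|-\tfrac12|H\cap A|\big|=\tfrac12\big|\sum_{x\in H\cap A}\chi(x)\big|$, which by a Chernoff/Hoeffding bound is $O(\sqrt{|A|\log(1/q)})$ except with probability $q$. Choosing $q\approx 1/\pi_{\calR}(|A|)$ and a union bound over the $O(|A|^d)$ classes shows that with positive probability every range satisfies
\[
\Big|\frac{|H\cap B|}{|B|}-\frac{|H\cap A|}{|A|}\Big|\;\le\;\delta(|A|):=\frac{c\sqrt{d\log|A|}}{\sqrt{|A|}}.
\]
I would then \emph{derandomize} this with the method of conditional probabilities, using the exponential pessimistic estimator $\Phi=\sum_{H\in\calR(A)}\big(e^{\lambda Z_H}+e^{-\lambda Z_H}\big)$ (with the usual correction so that greedily fixing the next color never increases $\Phi$), where $Z_H$ is the partial signed sum over the already-colored elements of $H\cap A$. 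The subspace oracle supplies $\calR(A)$; each of the $|A|$ colour choices costs $O(|\calR(A)|)=O(|A|^d)$ re-evaluations, so one halving of $A$ runs in $O(|A|^{d+1})$ time (plus the oracle cost) and outputs a $B$ meeting the displayed bound.

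\textbf{The schedule.} Iterating the subroutine from $X$ down to size $N$ accumulates relative error $\sum_i\delta(|S_i|)$; since the sizes shrink geometrically the terms $\delta(|S_i|)$ \emph{grow} geometrically, so the sum is dominated by its last term $\delta(2N)=\Theta(1/r)$, which pins down both the target error $1/r$ and the target size $N=O(dr^2\log r)$. A naive first halving of all of $X$ would, however, cost $n^{\Theta(d)}$; to avoid this I would never run the discrepancy computation on a large set. Instead, in each round partition the current set into blocks of a bounded size $b_i$ and halve each block separately. A short calculation shows halving every block of size $b$ changes each range-frequency by at most $\delta(b)$ \emph{regardless of the number of blocks} (the per-block errors, each $O(\sqrt{b\,d\log b})$, can in the worst case only add, and dividing by the total size restores the $\delta(b)$ bound); picking the $b_i$ to grow geometrically toward $b_{\text{last}}=\Theta(N)$ keeps $\sum_i\delta(b_i)=O(1/r)$ while the per-round costs $\sum_{\text{blocks}}O(b_i^{d+1})$ telescope to $O(n\cdot N^d)=O(n(r^2\log r)^d)$. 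The final set $S_k$ is the claimed approximation.

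\textbf{Main obstacle.} The delicate part is the last step: one must exhibit a \emph{single} block-size schedule that simultaneously (i) forces the round errors into a geometric series summing to $\le 1/r$ with final size only $O(dr^2\log r)$ — i.e.\ no spurious $\log n$ and no extra $\log r$, which requires the discrepancy constant in the halving lemma to be genuinely tight — and (ii) keeps every expensive computation on sets of size $\poly(r)$, so the total work is $O(n)$ times a function of $r$ alone. Reconciling these two demands, together with the routine bookkeeping for balanced colorings and the $\lceil\cdot/2\rceil$ rounding at each level, is where the real effort goes; by contrast, once the polynomial shatter bound $\pi_{\calR}(m)=O(m^d)$ is available, the discrepancy lemma and its derandomization are entirely standard.
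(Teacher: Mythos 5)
First, note that the paper does not prove Theorem \ref{theo:JM}: it quotes the result from Matou\v{s}ek's STOC~1991 paper, so your proposal can only be measured against that original argument. Your two ingredients are the right ones, and both are essentially correct as stated: the halving lemma (derandomized via the hyperbolic-cosine pessimistic estimator, error $\delta(b)=O(\sqrt{d\log b/b})$ in time $O(b^{d+1})$), and the observation that halving every block of size $b$ perturbs each range-frequency by only $\delta(b)$ regardless of the number of blocks. The genuine gap is the schedule, which you flag as the ``main obstacle'' but then assert rather than resolve: no choice of block sizes ``growing geometrically toward $b_{\text{last}}=\Theta(N)$'' does what you claim. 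If every $b_i\le N$, then each of the $L=\log_2(n/N)$ rounds contributes error at least $\delta(N)=\Theta(1/r)$, so the accumulated error is $\Theta(\log(n/N)/r)$, not $O(1/r)$. If instead the $b_i$ double each round starting from $\Theta(N)$ --- which is what is needed to make $\sum_i\delta(b_i)$ a geometric series dominated by its first term --- the blocks reach size $\Theta(\sqrt{nN})$ before a single block remains, and halving a set of that size costs $\Theta((nN)^{(d+1)/2})$, superlinear in $n$ for every $d\ge 2$. So the two demands in your last paragraph genuinely conflict under either natural reading of your schedule; this is not rounding bookkeeping.

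The resolution is a more delicate growth rate: the block size must grow by a factor of roughly $2^{1/d}$ per round, i.e.\ $b_i\approx 2^{(i-1)/d}\,T$ with $T=\Theta(r^2\log r)$ for constant $d$. Then $b_i^{\,d}$ grows at essentially the rate at which the surviving set $n_i=n/2^{i-1}$ shrinks, so each round costs $O(n_i b_i^{\,d})=O(nT^d)$ and the round costs are summable to $O(nT^d)$ after a mild damping of the exponent; simultaneously $\delta(b_i)$ decays geometrically with ratio $2^{-1/(2d)}$, so the accumulated error is $O(d)\cdot\delta(T)=O(1/r)$ for constant $d$ after adjusting $T$ by a $\mathrm{poly}(d)$ factor. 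The blocks grow until round $i^*$ with $2^{i^*-1}\approx(n/T)^{d/(d+1)}$, at which point a single block of size $\Theta\bigl((nT^d)^{1/(d+1)}\bigr)$ remains; halving it costs $O(nT^d)$, and from there your pure repeated-halving analysis (errors increasing geometrically, dominated by the final $\delta(2N)=\Theta(1/r)$) correctly finishes the construction. Without this $2^{1/d}$ growth rate --- the one quantitative idea your sketch is missing --- the claimed time bound cannot be recovered.
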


Matou\v{s}ek's sampling theorem goes beyond traditional geometry 
  and completely derandomizes the theory of Vapnik-Chervonenkis \cite{VC71}.

\begin{theorem}[Vapnik and Chervonenkis] \label{theo:VC}
There exists a constant $c$ such that 
  for any finite range space $\Sigma =(X,\calR)$ 
  and $\epsilon, \delta \in(0,1)$, if $S$ is a set of 
$ c\cdot \frac{d}{\epsilon^2}\left(\log\frac{d}{\epsilon\delta})\right)$
 uniform and independent samples
 from $X$, where $d = \mbox{\rm VC}(\Sigma)$, then:
%\begin{eqnarray}
$\prob{}{\mbox{$S$ is an $\epsilon$-sample for $\Sigma$}} \geq 1 -\delta$
%\end{eqnarray}
\end{theorem}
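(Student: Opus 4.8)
The plan is to run the classical \emph{double-sampling} (symmetrization) argument of Vapnik and Chervonenkis, with the Sauer--Shelah lemma --- which bounds the number of distinct traces a range family of VC dimension $d$ can cut on an $m$-point set by $\sum_{i=0}^{d}\binom{m}{i}\le(em/d)^d$ --- used to replace a union bound over the possibly infinite family $\calR$ by a union bound over a collection of ranges that is only polynomially large in $m$.

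First I would symmetrize. Fix the sample size $m$ (to be chosen at the end) and let $S=(x_1,\dots,x_m)$ and $S'=(x_1',\dots,x_m')$ be two independent uniform samples of size $m$ from $X$. Writing $\mu(H)=|H\cap X|/|X|$ and $\mu_S(H)=|H\cap S|/m$, let $A$ be the event that $|\mu_S(H)-\mu(H)|>\eps$ for some $H\in\calR$, and let $B$ be the event that $|\mu_S(H)-\mu_{S'}(H)|>\eps/2$ for some $H\in\calR$. The first claim is $\prob{}{B}\ge\tfrac12\prob{}{A}$, valid once $m\ge 2/\eps^2$: on $A$, fix a witnessing range $H=H(S)$; since $\mu_{S'}(H)$ is an average of $m$ independent Bernoulli variables with variance at most $1/(4m)$, Chebyshev's inequality gives $|\mu_{S'}(H)-\mu(H)|\le\eps/2$ with probability at least $1/2$ over $S'$, and the triangle inequality then puts the pair $(S,S')$ in $B$. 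So it suffices to prove $\prob{}{B}\le\de/2$.

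Next I would condition on the pooled multiset $Z=S\cup S'$ of $2m$ points. Given $Z$, the ordered pair $(S,S')$ is distributed as a uniformly random balanced split of $Z$ --- equivalently, obtained from any fixed split by independently swapping each matched pair $(x_i,x_i')$ with probability $1/2$. The family $\{H\cap Z:H\in\calR\}$ now has size at most $\pi_{\calR}(2m)\le(2em/d)^d$, so it is enough to bound, for a \emph{single} fixed $H$, the conditional probability that $|\mu_S(H)-\mu_{S'}(H)|>\eps/2$, and then union-bound. For fixed $H$, the difference $|H\cap S|-|H\cap S'|$ is a sum of at most $m$ independent mean-zero $\pm1$ terms (one per matched pair that straddles $H$), so Hoeffding's inequality yields $\prob{}{|\mu_S(H)-\mu_{S'}(H)|>\eps/2\mid Z}\le 2\exp(-\eps^2 m/8)$. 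Averaging over $Z$ and taking the union bound, $\prob{}{B}\le 2(2em/d)^d\exp(-\eps^2 m/8)$. Finally I would choose $m$: this last quantity is at most $\de/2$ once $\eps^2 m/8\ge d\log(2em/d)+\log(4/\de)$, and unwinding this inequality shows that $m=c\,\frac{d}{\eps^2}\log\frac{d}{\eps\de}$ suffices for a suitable absolute constant $c$ (which in particular makes $m\ge 2/\eps^2$ hold automatically); then $\prob{}{A}\le 2\prob{}{B}\le\de$, as required.

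I expect the symmetrization step to be the main obstacle. Controlling the deviation of $\mu_S(H)$ from the true frequency $\mu(H)$ directly is hopeless when $\calR$ is infinite, whereas the two-sample event $B$ depends on $\calR$ only through its traces on the finite pooled set $Z$, to which Sauer--Shelah applies; obtaining the clean comparison $\prob{}{B}\ge\tfrac12\prob{}{A}$, rather than something lossier, is the delicate point and is exactly what forces the mild requirement $m\gtrsim 1/\eps^2$. Everything downstream --- the Sauer--Shelah count, the Hoeffding tail bound for the random split, and the final choice of $m$ --- is careful but routine.
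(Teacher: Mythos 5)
The paper does not prove this theorem; it is quoted as a classical result and attributed to Vapnik and Chervonenkis \cite{VC71}, so there is no in-paper argument to compare against. Your proposal correctly reconstructs the standard proof --- symmetrization via a ghost sample with a Chebyshev bound (requiring $m \gtrsim 1/\epsilon^2$), reduction to the finitely many traces on the pooled $2m$ points via the Sauer--Shelah bound $\pi_{\calR}(2m) \leq (2em/d)^d$, a Hoeffding tail bound under random pair swaps, and a union bound --- and the final choice $m = c\,\frac{d}{\epsilon^2}\log\frac{d}{\epsilon\delta}$ does close the computation for a suitable absolute constant $c$. The only nitpick is that the conditional law of $(S,S')$ given the pooled multiset (a uniformly random balanced split) is not literally the same as the independent pair-swap distribution; but the exchangeability of the $2m$ i.i.d.\ draws justifies the pair-swap representation directly, which is all your Hoeffding step uses, so the argument stands.
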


Matou\v{s}ek's deterministic algorithm
  can be applied to geometric classifiers %(such as halfspaces)
  as well as any classifier --- known as a
 {\em concept space} --- 
  that arises in statistical learning theory \cite{VapnikBook}.
The concept of range space has also provided a powerful tool for 
  capturing geometric structures, and played a profound role --- both in theory
  and in practice --- for data clustering \cite{Feldmancoreset} and
 geometric approximation \cite{AgarwalCoreset}.
The beauty of 
  Vapnik-Chervonenkis' theory and Matou\v{s}ek's sampling theorem
  lies in the {\em essence of dimensionality},
  which is generalized from geometric spaces to abstract range spaces.
In Euclidean geometry, the dimensionality comes naturally
  to many of us.
For abstract range spaces, the growth of the {\em  shatter functions}
  is more intrinsic!
%% For any $m\leq |X|$, let:
%% \begin{eqnarray}
%% \pi_R(m) = \max_{S\subseteq X, |S| = m} \left|\{r\cap X:  r\in R\}
%% \right|
%% \end{eqnarray}
If $\pi_{\calR}(m) = 2^m$, then there exists a set $S\subseteq X$
  of $m$ elements that is {\em shattered}, i.e., 
  for any subset $T$ of $S \subseteq X$, there exists
  $H\in \calR$ such that $T = H \cap S$.
In other words, we can use $\calR$ to build classifiers for
   all subsets of $S$.
There is a beautiful {\em dichotomy} of polynomial and exponential
  complexity within the concept of shattering:
\begin{itemize}
\item   either $X$ has a subset $S \subseteq X$ of size $m$ that can be shattered
  by $\calR$,
\item  or for any $U \subseteq X$, $|U| \geq m$, 
  $|\{H\cap U: H\in \calR\}|$ is polynomial in $|U|$.
\end{itemize}
The latter case implies that $\calR$ can only be used to build
 a polynomial number of classifiers for $U$.
The celebrated {\em  VC-dimension} of range space $\Sigma = (X,\calR)$,
denoted by 
$\mbox{\rm VC}(\Sigma)$, is defined as: 
$$\mbox{\rm VC}(\Sigma) := \argmax \{m:  \pi_{\calR}(m) = 2^m\}.$$
This polynomial-exponential dichotomy is 
  established by the following
  Sauer's lemma.\footnote{This lemma is also
known as Perles-Sauer-Shelah's lemma.}

\begin{lemma}[Sauer]
For any range space  $\Sigma = (X,\calR)$ and $\forall m> \mbox{\rm VC}(\Sigma)$, 
$\pi_{\calR}(m) \leq \sum_{k=0}^{\mbox{\small\rm VC}(\Sigma)}
{m \choose k}$.
\end{lemma}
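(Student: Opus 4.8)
The plan is to reduce the statement to a purely combinatorial fact about finite set systems and then prove that fact by a shifting/``link'' induction. Write $d := \mbox{\rm VC}(\Sigma)$ and fix an arbitrary $S \subseteq X$ with $|S| = m$; since $\pi_{\calR}(m) = \sup_{|S|=m}|{\calR}(S)|$, it suffices to bound $|\mathcal{F}|$, where $\mathcal{F} := {\calR}(S) = \{H \cap S : H \in \calR\}$ is regarded as a family of subsets of the $m$-element ground set $S$. The first observation is that $\mathcal{F}$ shatters a subset $T \subseteq S$ if and only if $\calR$ does; hence, because $m > d$, the family $\mathcal{F}$ shatters no subset of $S$ of size $d+1$. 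So the lemma follows from the claim: \emph{if $\mathcal{F}$ is a family of subsets of an $m$-element set $S$ that shatters no set of size $d+1$, then $|\mathcal{F}| \le \sum_{k=0}^{d}{m \choose k}$.}

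I would prove the claim by double induction on $m$ and $d$, the base cases $m = 0$ and $d = 0$ being immediate (in both cases all members of $\mathcal{F}$ coincide, so $|\mathcal{F}| \le 1 = {m \choose 0}$). For the inductive step, fix an element $x \in S$, set $S' := S \setminus \{x\}$, and split $\mathcal{F}$ into two families living on $S'$:
\begin{align*}
\mathcal{F}_1 &:= \{\, F \setminus \{x\} \;:\; F \in \mathcal{F} \,\}, &
\mathcal{F}_2 &:= \{\, F \subseteq S' \;:\; F \in \mathcal{F} \text{ and } F \cup \{x\} \in \mathcal{F} \,\}.
\end{align*}
A one-line counting argument gives $|\mathcal{F}| = |\mathcal{F}_1| + |\mathcal{F}_2|$: deleting $x$ merges exactly those pairs $\{F, F\cup\{x\}\}$ with both members in $\mathcal{F}$, and such pairs are in bijection with $\mathcal{F}_2$.

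The heart of the argument is controlling the two pieces. The family $\mathcal{F}_1$ still shatters no $(d{+}1)$-set: if it shattered some $T \subseteq S'$ with $|T| = d+1$, then since $x \notin T$ we have $(F\setminus\{x\})\cap T = F \cap T$ for every $F$, so $\mathcal{F}$ itself would shatter $T$, a contradiction. For $\mathcal{F}_2$ I claim it shatters no $d$-set: if $\mathcal{F}_2$ shattered some $T \subseteq S'$ with $|T| = d$, then for every $A \subseteq T$ there is $F \in \mathcal{F}_2$ with $F \cap T = A$, and by definition both $F$ and $F \cup \{x\}$ lie in $\mathcal{F}$; hence on $T \cup \{x\}$ the family $\mathcal{F}$ realizes both $A$ and $A \cup \{x\}$ for every $A \subseteq T$, i.e. $\mathcal{F}$ shatters the $(d{+}1)$-set $T \cup \{x\}$ — a contradiction. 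This ``the link drops a dimension'' step is the one genuinely nonroutine point; everything around it is bookkeeping.

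Applying the induction hypothesis over the $(m{-}1)$-element set $S'$ then gives $|\mathcal{F}_1| \le \sum_{k=0}^{d}{m-1 \choose k}$ and $|\mathcal{F}_2| \le \sum_{k=0}^{d-1}{m-1 \choose k}$, and adding these and regrouping via Pascal's identity (with the convention ${m-1 \choose -1} = 0$) yields
\[
|\mathcal{F}| \;\le\; \sum_{k=0}^{d}{m-1 \choose k} + \sum_{k=0}^{d-1}{m-1 \choose k} \;=\; \sum_{k=0}^{d}\left({m-1 \choose k} + {m-1 \choose k-1}\right) \;=\; \sum_{k=0}^{d}{m \choose k},
\]
which closes the induction and proves the lemma. (The same recursion in fact proves the stronger Sauer--Shelah--Perles statement that $|\mathcal{F}|$ is at most the number of subsets of $S$ shattered by $\mathcal{F}$, and an alternative proof via the linear-algebra/polynomial method is also available; but the version above is all that is needed here.)
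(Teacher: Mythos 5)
Your proof is correct: it is the classical Sauer--Shelah--Perles induction (reduce to the trace family $\mathcal{F}={\calR}(S)$ on an $m$-set, split $|\mathcal{F}|=|\mathcal{F}_1|+|\mathcal{F}_2|$ by deleting an element $x$, observe that $\mathcal{F}_2$ shatters no $d$-set because a shattered $T$ would lift to a shattered $T\cup\{x\}$, and close with Pascal's identity), and every step checks out. Note that the paper itself states Sauer's lemma without proof, citing it as a known result, so there is no in-paper argument to compare against; your write-up supplies the standard one.
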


Sauer's lemma extends the following well-known fact
  of Euclidean geometry:
   any set of   $m$ hyperplanes in $\Reals{d}$ divides the space into 
  at most $O(m^d)$ convex cells.
By the point-hyperplane duality, any set of $m$ points can be divided
  into at $O(m^d)$ subsets by halfspaces.

Although  my construction of $\epsilon$-samples in $\Reals{d}$ 
  was good enough for designing linear-time 
  centerpoint approximation algorithm in fixed dimensions,
  it did not immediately generalize to arbitrary range spaces, 
   because it was tailored to the
   geometric properties of Euclidean spaces.

By addressing abstract range spaces, 
  Jirka resolved the intrinsic algorithmic problem at the heart 
  of  Vapnik-Chervonenkis' sampling theory.
Like Theorem \ref{theo:JM},
  many of Jirka's other landmark and breakthrough
  results are elegant, insightful, and fundamental.
By going beyond the original objects
  --- such as Euclidean spaces or  linear programs \cite{JMLP} ---
  Jirka usually went directly to the {\em essence} of 
  the challenging problems to come up with beautiful solutions
  that were natural to him but remarkable 
  to the field.

%\section{Network Essence: Beyond Graph-Based Network Models}  
%\section{Challenges for Understanding Networks Data}  

\section{Backgrounds: Understanding Multifaceted Network Data}

To analyze the structures of social and information networks
  in the age of Big Data, we need to overcome
  various conceptual and algorithmic challenges both in understanding
  network data and in formulating solution concepts.
For both, we need to capture the {\em network essence}.

\subsection{The Graph Model  --- A Basic Network Facet}

At the most basic level, 
  a network can be modeled as a graph $G = (V,E)$,
  which characterizes the structure of the network
  in terms of:
\begin{itemize}
\item {\bf nodes}: for example, Webpages, Internet routers,
  scholarly articles, people,  random variables, or counties
\item {\bf edges}: for example, links, connections, citations, friends, conditional dependencies,  or voting similarities
\end{itemize}
In general, nodes in many real-world networks may not be 
  ``homogeneous'' \cite{MultiNodeGraph}, 
  as they may have some additional features, specifying 
  the {\em types} or {\em states} of the node elements.
Similarly, edges may have additional features, specifying
  the {\em levels} and/or {\em types} of 
  pairwise interactions, associations, or affinities.

Networks with  ``homogeneous'' types of nodes and edges are closest to
  the combinatorial structures studied under traditional graph theory, which 
  considers both weighted or unweighted graphs.
Three basic classes of weighted graphs 
  often appear in applications.  
The first class consists of {\em distance networks},
  where each edge $e\in E$ is assigned a number $l_e\geq 0$,
  representing the {\em length} of edge $e$.
The second class consists of {\em affinity networks},
  where each edge $(u,v) \in E$ is assigned a  weight $w_{u,v} \geq 0$,
  specifying $u$'s {\em affinity weight} towards $v$.
The third class consists of {\em probabilistic networks},
  where each (directed) edge $(u,v) \in E$ is assigned a probability 
  $p_{u,v} \geq 0$, modeling how a random process  connects $u$ to $v$.
It is usually more natural to view
    maps or the Internet  as distance networks,
    social networks as affinity networks, and 
    Markov processes as  probabilistic networks.
Depending on applications, a graph may be directed or undirected.
Examples of directed networks include:
 the Web, Twitter, the citation graphs of scholarly publications, 
  and Markov processes.
Meanwhile, Facebook ``friends'' or collaboration networks 
  are examples of undirected graphs.

In this article, we will first focus on affinity networks.
An affinity network with $n$ nodes can be mathematically represented
   as a weighted graph  $G = (V,E,\WW)$.
Unless otherwise stated, we assume 
  $V = [n]$ and $\WW$ is an $n\times n$ non-negative matrix (for example from 
$[0,1]^{n\times n}$).
We will follow the convention that for $i\neq j$, $w_{i,j} = 0$,
  if and only  if, $(i,j)\not\in E$.
If $\WW$ is a symmetric matrix, then we say $G$  is {\em undirected}.
If $w_{i,j} \in \{0,1\}$, $\forall i,j\in V$,  
  then we say $G$ is {\em unweighted}. 

Although they do not always fit, 
  three popular data models for defining pairwise affinity weights 
  are the metric model, feature model, and statistical model.
The first assumes that an underlying metric space,
  $\calM = \left(V,\dist\right)$,
   impacts the interactions among nodes in a network.
The affinities  between nodes may 
   then be determined by their {\em distances} from
  the underlying metric space:
The closer two elements are, the higher their affinity
  becomes, and the more interactions they have.  
A standard way to define affinity  weights for  $u\neq v$ is: 
$w_{u,v} = \dist(u,v)^{-\alpha}$,
  for some $\alpha > 0$.
The second assumes   that there exists an underlying ``feature'' space,
  $\calF = \left(V, \FF\right)$,
  that impacts the interactions among nodes in a network.
This is a widely-used alternative data model for information networks.
In a $d$-dimensional feature space, $\FF$ is an $n\times d$ matrix, where
  $f_{u,i} \in  \Reals{+}\cup\{0\}$ denotes $u$'s {\em quality score} 
   with respect the $i^{th}$ feature.  
Let $\ff_u$ denote the $u^{th}$ row of $\FF$, i.e., 
  the {\em feature vector} of node $u$.
The affinity weights  $w_{u,v}$ between two nodes $u$ and $v$ may 
  then be determined by the {\em correlation} between
  their features:
$w_{u,v} \sim \left(\ff_u^T\cdot \ff_v\right) = \sum_{i=1}^d f_{u,i}\cdot f_{v,i}.$
The third assumes that there exists an underlying 
  statistical space (such as a stochastic block model, Markov process, or (Gaussian) random field)
  that impacts the pairwise interactions.
The higher the dependency between two elements is, the higher their 
 strength of tie is.

If one thinks that the meaning of weighted networks is complex, the 
  real-world network data is far more complex and diverse.
We will have more discussions in Section \ref{sec:RichnessOther} and
 Section \ref{sec:RW}.

\subsection{Sparsity and Underlying Models} 

A basic challenge in network analysis is that
  real network data that we observe is only a reflection of
  underlying network models.
Thus, like machine learning tasks which have to work 
  with samples from an unknown underlying distribution,
  network analysis tasks typically work 
  with observed network data, 
  which is usually different from the underlying network model.
As argued in 
  \cite{B3CT,HXNetworkCompletion,KLNetworkCompletion,TengScalable}, 
  a real-world  social and information network may be viewed as an
   observed network, induced by a ``complete-information''  underlying 
   preference/affinity/statistical/geometric/feature/economical model.
However, these observed % social and information
   networks are typically sparse with many missing links.

\begin{quote}
{\em 
For studying network phenomena, it is crucial to mathematically
   understand underlying network models, while 
  algorithmically work efficiently with sparse observed data. 
Thus, developing systematic approaches to uncover or 
  capture the 
  underlying network model --- or the network essence --- is 
  a central and challenging mathematical task  in network analysis.
}
\end{quote}
Implicitly or explicitly, underlying network models are the ultimate
  guide for understanding network phenomena, and
  for inferring missing network data, and distinguishing missing links
  from absent links.
To study basic network concepts, we also need to simultaneously
  understand the observed and underlying networks.
Some network concepts, such as {\em centrality}, capture 
 various aspects of ``dimension reduction'' of network data.
Others characterizations, 
  such as  {\em clusterability}  and {\em community classification},
  are more naturally expressed in %  can have higher 
  a space with  dimension higher than that of the observed networks. 

Schematically, centrality assigns a numerical score or ranking
  to each node, which measures the {\em importance}
  or {\em significance} of each node in a network 
  \cite{PageRank,NewmanBook,Bonacich1987power,Freeman,FreemanBetweenness,BorgattiCentrality,BorgattiEverett,EverettBorgattiGroup,BonacichGroupCentrality,Faust,Katz,BavelasCloseness,SabidussiCloseness,PercolationCentrality,Donetti2005Entangled,ShapleyValueForCentrality1,ShapleyValueForCentrality2}.
Mathematically, a numerical centrality measure is a mapping from
  a network $G=(V,E,\WW)$ to a $|V|$-dimensional real vector:
\begin{eqnarray}
\left[\ \centrality_{\WW}(v)\ \right]_{v\in V} \in \R^{|V|}
\end{eqnarray}

For example, a widely used centrality measure is the 
  PageRank centrality.
Suppose $G = (V,E,\WW)$ is a weighted directed graph.
The {\em PageRank centrality} uses an
   additional parameter $\alpha\in (0,1)$ ---
   known as the {\em restart constant} ---
   to define a finite Markov process whose transition rule
  --- for any node $v\in V$ --- is the following:
\begin{itemize}
\item  with probability $\alpha$, restart at a random node in $V$, and
\item  with probability $(1-\alpha)$,
   move to a neighbor of $v$,  chosen randomly with probability
   proportional to edge weights out of $v$.
\end{itemize}
Then, the  {\em  PageRank centrality} (with restart constant $\alpha$)
  of any $v\in V$  is proportional to
 $v$'s stationary probability in this Markov chain.

In contrast, clusterability
  assigns a numerical score or ranking to each subset of nodes,
  which measures the {\em coherence}
  of each group in a network \cite{NewmanBook,LovaszSimonovits,TengScalable}.
Mathematically, a numerical clusterability measure
 is a mapping from a network $G = (V,E,\WW)$
 to a $2^{|V|}$-dimensional real vector:
\begin{eqnarray}
\left[\ \SCF_{\WW}(S)\ \right]_{S\subseteq V} \in [0,1]^{2^{|V|}}
\end{eqnarray}
An example of clusterability measure is
  {\em conductance} 
  \cite{LovaszSimonovits}.\footnote{The {\em conductance} of 
  a group $S\subset V$ is the ratio of its {\em external connection} to its
  {\em total connection} in $G$.}
Similarly, a community-characterization rule \cite{BCMTITCS} 
 is a mapping from a network $G = (V,E,\WW)$
 to a $2^{|V|}$-dimensional Boolean vector:
\begin{eqnarray}
\left[\ \calC_{\WW}(S)\ \right]_{S\subseteq V} \in \{0,1\}^{2^{|V|}}
\end{eqnarray}
indicating whether or not each group $S\subseteq V$ is a community
  in $G$.
Clusterability and community-identification rules
  have much higher dimensionality than centrality.
To a certain degree, they can be viewed as a 
  ``complete-information'' model of the observed network.
Thus again:
\begin{quote}
{\em  Explicitly or implicitly,
  the formulations of these network 
  concepts are  mathematical processes
  of uncovering or capturing underlying network models.
}
\end{quote}

%  and determining connection between uwith the solution concepts.

\subsection{Multifaceted Network Data: Beyond Graph-Based Network Models}
%\label{Sec:BeyondGraphs}
\label{sec:RichnessOther}
%For the later, we need to understand the relation between solution concepts  and observed network data.
Another basic challenge in network analysis
  is that %while graphs are widely used   for defining 
  %the structure of social and  information networks, 
  real-world network  data is 
  much richer than the graph-theoretical representations.
For example, social networks are more than weighted graphs. 
Likewise, the Web and Twitter are not just  directed graphs.  
In general, network interactions and phenomena 
  --- such as social influence \cite{Kempe03} or electoral behavior
 \cite{eulau1980columbia}, ---    are more complex than 
  what can be captured by nodes and edges.
The network interactions are often the result of the interplay between 
  dynamic mathematical processes and static underlying graph structures 
  \cite{GhoshLermanTengYan,ChenTeng}.

\subsubsection*{{\sc Diverse Network Models}}

The richness of network data and diversity of network concepts
  encourage us to consider network 
  models beyond graphs \cite{TengScalable}.
For example, 
  each clusterability measure
  $\left[\SCF_{\WW}(S)\right]_{S\subseteq V}$ of a weighted graph 
  $G=(V,E,\WW)$ explicitly 
  defines a complete-information, weighted hyper-network:

\begin{definition}[Cooperative Model: Weighted Hypergraphs]
A weighted hypergraph over $V$  is given by $H = (V,E,\V{\tau})$
  where $E\subseteq 2^V$ is a set of hyper-edges and 
  $\V{\tau}: E \rightarrow \Reals{}$ is a function 
  that assigns weights to hyper-edges.
$H$ is a complete-information cooperative networks if
  $E= 2^V$.
\end{definition}

We refer to weighted hypergraphs as {\em cooperative networks} 
  because they are the central subjects 
  in classical {\em cooperative game theory}, but under
   a different name \cite{ShapleyValue}.
An $n$-person {\em cooperative game} over $V = [n]$
  is specified by a {\em characteristic function}
  $\V{\tau}: 2^V \rightarrow \R$,
  where for any coalition $S\subseteq V$,
  $\V{\tau}(S)$ denotes the {\em cooperative utility}  of  $S$.

Cooperative networks are generalization of undirected weighted graphs.
One can also generalize directed networks,
  which specify directed node-node interactions.
The first one below explicitly captures
  node-group interactions,
  while the second one captures group-group interactions.

\begin{definition}[Incentive Model]\label{def:incentiveNetwork}
An {\em incentive network} over $V$ is a pair $U = (V,\bvec{u})$.
For each $s \in V$, 
  $u_s: 2^{V\setminus \{s\}} \rightarrow \Reals{}$ 
  specifies $s$'s {\em incentive utility}
  over subsets of $V\setminus \{s\}$.
In other words, there are $|S|$ utility values, 
  $\{u_s(S \setminus \{s\}\}_{s\in S}$,
  associated with each group $S\subseteq V$ in
  the incentive network. 
For each $s\in S$, the {\em value}
   of its interaction with the rest of the group
  $S \setminus \{s\}$ is explicitly 
  defined as $u_s(S \setminus \{s\})$.
\end{definition}

\begin{definition}[Powerset Model]\label{def:influenceNetwork}
A {\em powerset network} over $V$ is a weighted directed
  network on the powersets of $V$.
In other words, a powerset network $P = (V,\V{\theta})$
 is specified by a function
  $\V{\theta}: 2^V\times 2^V \rightarrow \Reals{}$.
\end{definition}

For example --- as pointed in \cite{Kempe03,ChenTeng} ---
  a social-influence instance fundamentally defines
  a powerset network.
Recall that a social-influence instance $\calI$
   is specified by a directed graph $G=(V,E)$ 
   and an influence model $\calD$
  \cite{Kempe03,DomingosRichardson,RichardsonDomingos}, where
$G$ defines the graph structure of the social network
  and $\calD$ defines a stochastic process that characterizes
  how nodes in each {\em seed set} $S\subseteq V$  
  {\em collectively influence}  
  other nodes using the edge structures of $G$ \cite{Kempe03}.
A popular influence model is % {\em linear threshold}  and 
  {\em independent cascade} (IC)\footnote{%For example,
In the classical IC model, each directed edge $(u,v) \in E$
  has an influence probability $p_{u,v}\in [0,1]$.
The probability profile defines a discrete-time influence process
  when  given a seed set $S$:
At time $0$,  nodes in $S$ are activated  while other nodes are inactive.
At time $t \ge 1$,
   for any node $u$ activated at time $t-1$, it has one
   chance to activate each of its inactive out-neighbor
   $v$ with an independent probability of $p_{u,v}$.
When there is no more activation,
  this stochastic process ends with a random set %$\bI(S)$
  of nodes activated during the process.
} \cite{Kempe03}.
%\end{itemize}

Mathematically, the influence process $\calD$ and the network structure $G$
  together define a probability distribution 
  $\V{P}_{G,\calD}: 2^V\times 2^V\rightarrow [0,1]$:
%  of the {\em dynamic} influence model over graph $G$:
For each $T \in 2^V$, 
 $\V{P}_{G,\calD}[S,T]$ 
 specifies the probability that $T$ is the final activated set 
  when $S$ cascades its influence through the network $G$.
Thus, $P_{\calI}=(V,\V{P}_{G,\calD})$ defines a natural powerset network,
  which can be viewed as the {\em underlying network}
  induced by the interplay between the {\em static} network structure $G$ and {\em dynamic} influence process $\calD$.

An important quality measure of $S$ in this process is $S$'s  
  {\em influence spread}  \cite{Kempe03}.
It can be defined from the powerset model $P_{\calI}=(V,\V{P}_{G,\calD})$ 
  as following:
\[ \V{\sigma}_{G,\calD}(S) = \sum_{T\subseteq V} |T|\cdot \V{P}_{G,\calD}[S,T].\]
Thus, $(V,\V{\sigma}_{G,\calD})$ also defines a natural 
  cooperative network \cite{ChenTeng}.

In many applications and studies, {\em ordinal network models}
   rather than {\em cardinal network models}
  are used to capture the preferences among nodes.
Two classical applications of preference frameworks
  are voting %in social choices
   \cite{ArrowBook} and stable marriage/coalition formation
  \cite{GaleShapley,RothSM,stableMarriage,BramsCoalition}.
A %An example of the 
  modern use of preference models is
  the {\em Border Gateway  Protocol} (BGP) 
  for network routing 
  between autonomous  Internet systems  \cite{BGP,BGPSurvay}.

In a recent axiomatic study of community identification in social networks, 
  Borgs {\em et al.} \cite{BCMTITCS,B3CT}
  considered the following {\em abstract} social/information
   network framework. 
Below, for a non-empty finite set $V$, let $\LV$
  denote the set of all {\em linear orders} on $V$.

\begin{definition}[Preference Model]\label{def:preferenceNetwork}
A {\em preference network} over $V$ is a pair $A = (V,\Pi)$, where
%  $V$ is a  non-empty finite set and 
% $\Pi =\{\V{\pi}_u\}_{u\in V}\in \blinear{V}^{|V|}$ 
 $\Pi =\{\V{\pi}_u\}_{u\in V}\in \LV^{|V|}$ 
 is a {\em preference profile} in which $\V{\pi}_u$ 
 specifies  $u$'s individual preference.
\end{definition}

\subsubsection*{{\sc Understanding Network Facets and Network Concepts}}

Each network model enables us to focus on 
   different facets of network data.
For example, 
  the powerset model offers the most natural framework for capturing
  the underlying interplay between influence processes 
  and network structures.
The cooperative model matches the explicit representation 
  of clusterability, group utilities, and influence spreads.
While traditional graph-based network data often consists solely of 
  pairwise interactions, affinities, or associations,  
  a community is formed by a group of individuals.
%   who in addition to sharing interactions or affinities 
%  have collectively decided form a community   based on their preferences.
%% While information/social networks are usually given by
%%   data specifying pairwise interactions among its members,
%%   or the individuals' affinities/preferences 
%%   towards other members in the network,
%%   a community is formed by a group of individuals. 
Thus, the basic question for community identification is to understand
  ``how do individual preferences (affinities/associations) 
  result in group preferences or community coherence?'' \cite{BCMTITCS}
The preference model highlights
  the fundamental aspect of community characterization.
The preference model is also natural for addressing the
  question of summarizing individual preferences 
  into one collective preference, which is 
  fundamental in the formulation of network centrality \cite{TengScalable}.
Thus, studying network models beyond graphs 
   helps to broaden our understanding of social/information networks.

Several these network models, as defined above, are
  highly theoretical models.
Their complete-information profiles have exponential dimensionality
  in $|V|$.
To use them as underlying models in network analysis,
  succinct representations should be constructed to efficiently
  capture observed network data.
For example, both the conductance clusterability measure
  and the social-influence powerset network are succinctly defined.
Characterizing network concepts in these models and 
  effectively applying them 
  to understanding real network data
  are promising and fundamentally challenging 
  research directions in network science.
%In this article,
% we take an explorative initial step  in this network-analysis framework.

%% %\section{Beyond Graph-Based Network Models} \label{chap:community}
%% %\subsection{Proference Networks}
%% %\subsection{Incentive Networks}

%\section{Network Essence: PageRank Completion}  \label{Sec:NE}
\section{PageRank Completion}  \label{Sec:NE}
%\section{Personalized PageRank Matrices and Networks}

Network analysis is a task to capture the {\em essence} of the observed networks.
For example, graph embedding \cite{LLRGraphEmbedding,TengScalable} 
  can be viewed as a process to 
  identify the geometric essence of networks.
Similarly, network completion 
  \cite{HXNetworkCompletion,KLNetworkCompletion,MasrourCompletion},   
  graphon estimation 
  \cite{BorgsChayesSmith,graphonEstimation}, and
  community recovering in hidden stochastic block models \cite{AbbeSandonII}
  can be viewed as processes to distill the statistical essence
  of networks.
All these approaches build {\em constructive maps}
  from observed sparse
  graphs to underlying complete-information models.
In this section, we study the following basic question:

\begin{quote}
{\em 
Given an observed sparse affinity network $G= (V,E,\WW)$, 
 can we construct a complete-information affinity network %network model 
  that is consistent with $G$?
}
\end{quote}

This question is simpler than but relevant to matrix and network completion 
  \cite{HXNetworkCompletion,KLNetworkCompletion},
  which aims to infer the missing data from sparse, observed network data.
Like matrix/network completion,
  this problem is mathematically an inverse problem.
% is a basic and fundamentally challenging problems in network   analysis.
Conceptually, we need to formulate the meaning of 
``a complete-information affinity network consistent with $G$.''

Our study is also partially motivated by the following question asked in
\cite{AltmanTennenholtzPersonalized,B3CT}, aiming to deriving personalized 
  ranking information from graph-based network data:
\begin{quote}
{\em 
Given a sparse affinity network $G= (V,E,\WW)$, 
 how should we construct a complete-information preference model 
 that best captures the underlying individual preferences
 from network data given by  $G$?
}
\end{quote}

%% For various fundamental problems in network analysis,
%%  the formulations of network concepts are inverse problems: 
%% We need to incorporate the richness of multifaceted network data 
%%   while overcome sparsity of observed network structures.

We will prove 
  the following basic structural result\footnote{See Theorem \ref{theo:PageRankEssence} for the 
  precise statement.}:
Every connected, undirected, weighted graph  $G= (V,E,\WW)$
 has an undirected and weighted
   graph $\Gbar= (V,\Ebar,\WWbar)$, such that:
\begin{itemize}
\item {\bf Complete Information}: 
   $\Ebar$ forms  a complete graph with $|V|$ self-loops.
\item {\bf Degree and Stationary Preserving}:  %$\DD_{\WW} = \DD_{\WWbar}$.
 $\WW\cdot \11 = \WWbar\cdot \11$. Thus, 
  the random-walk Markov chains on $G$ and on $\Gbar$ 
  have the same stationary distribution.
\item {\bf PageRank Conforming}: 
The transition matrix $\MM_{\WWbar}$ of the random-walk 
  Markov chain on  $\Gbar$ is {\em conformal} to the PageRank of $G$,
  that is, $ \MM_{\WWbar}^T\cdot \11$ is proportional to the 
  PageRank centrality  of $G$
\item {\bf Spectral Approximation}: 
  $G$ and $\Gbar$ are spectrally similar.
 %\footnote{To be more formally stated  below.}
\end{itemize}

In the last condition, the similarity between $G$ and $\Gbar$ 
  is measured by the following 
  notion of {\em spectral similarity} 
  \cite{SpielmanTengSpectralSparsification}:
\begin{definition}[Spectral Similarity of Networks]
Suppose $G = (V,E,\WW)$ and $\Gbar = (V,\Ebar,\WWbar)$ 
  are two weighted undirected graphs  over the same set $V$ of $n$ nodes.
Let  $\LLL_{\WW} = \DD_{\WW} - \WW$ and 
  $\LLL_{\WWbar} = \DD_{\WWbar}- \WWbar$ 
  be  the {\em Laplacian matrices}, respectively, of these two 
  graphs.
Then, for $\sigma \geq 1$, we say 
   $G$ and $\Gbar$ are {\em $\sigma$-spectrally similar}
  if:
\begin{equation}\label{eqn:spectralSimilarity}
\forall \xx \in \Reals{n}, \quad  \frac{1}{\sigma}\cdot \xx^{T} \LLL_{\WWbar} \xx \leq \xx^{T} \LLL_{\WW} \xx \leq 
  \sigma\cdot \xx^{T} \LLL_{\WWbar} \xx
\end{equation}
\end{definition}
Many graph-theoretical measures, such as flows, cuts, 
  conductances, effective resistances, 
  are approximately preserved 
  by spectral similarity \cite{SpielmanTengSpectralSparsification,SparsificationCACM}.
We refer to $\Gbar= (V,\Ebar,\WWbar)$ 
  as the {\em PageRank essence} or {\em PageRank completion} 
  of $G = (V,E,\WW)$.

\subsection{The Personalized PageRank Matrix}

$\Gbar= (V,\Ebar,\WWbar)$ stated above is derived from 
  a well-known structure in network analysis, the personalized 
  PageRank matrix  of a network \cite{PageRankContribution,TengScalable}.

\subsubsection*{{\sc Personalized PageRanks}}

Generalizing the Markov process of PageRank, 
  Haveliwala \cite{Haveliwala03} introduced 
   personalized PageRanks.
Suppose $G = (V,E,\WW)$ is a weighted directed graph 
  and $\alpha > 0$ is a restart parameter.
For any distribution $\ss$ over $V$,
 consider the following Markov process,
   whose transition rule ---  for  any $v\in V$ --- is the following:
\begin{itemize}
\item with probability $\alpha$, restart at a random node in $V$
  according to distribution $\ss$, and
\item  with probability $(1-\alpha)$,
   move to a neighbor  of $v$, 
   chosen randomly with probability
   proportional to edge weights out of $v$.
\end{itemize}
Then, the {\em PageRank with respect to the
   starting vector $\ss$}, 
  denoted by $\pp_{\ss}$,
  is the stationary  distribution of this Markov chain.

Let $d^{out}_u = \sum_{v\in V} w_{u,v}$
% and    $d^{in}_u = \sum_{v\in V} w_{v,u}$, respectively,
    denotes the {\em out-degree}
% and {\em in-degree} 
  of $u\in V$ in $G$.
%Let $d_u = d^{out}_u + d^{in}_u$.
Then,  $\pp_{\ss}$ is the solution to the following equation:
\begin{eqnarray}\label{PPR} 
\pp_{\ss} = \alpha \cdot \ss + (1-\alpha)
  \cdot  \WW^T\cdot\left(\DD_{\WW}^{out}\right)^{-1}\cdot\pp_{\ss}
\end{eqnarray}
where  $\DD_{\WW}^{out} = {\rm diag}([d_1^{out},...,d_n^{out}])$ is
  the diagonal matrix of
  out degrees.
Let $\11_u$ denote the $n$-dimensional  vector
  whose $u^{th}$ location is 1 and all other entries in 
  $\11_u$ are zeros.
Haveliwala \cite{Haveliwala03} referred to $\pp_u := \pp_{\11_u}$ 
  as the {\em personalized PageRank} of $u\in V$ in $G$.
Personalized PageRank is asymmetric, and hence to emphasize this fact,
 we express $\pp_u$ as:
\[\pp_u = \left(\cpr{u}{1}, \cdots, \cpr{u}{n}\right)^T.\]
Then $\{\pp_u\}_{u\in V}$ --- the personalized PageRank profile ---
   defines  the following  matrix:
\begin{definition}[Personalized PageRank Matrix]\label{def:PPRM}
The {\em personalized PageRank matrix} of an
   $n$-node weighted graph  $G = (V,E, \WW)$ and restart constant $\alpha > 0$ is:
\begin{eqnarray}\label{eqn:PPR}
\PPR_{{\WW,\alpha}} = 
\left[\pp_{1},...,\pp_{n}\right]^T
=
\left[
\begin{array}{ccc} 
   \cpr{1}{1} & \cdots & \cpr{1}{n}\\
    \vdots & \cdots & \vdots\\
    \cpr{n}{1} & \cdots & \cpr{n}{n}\\
   \end{array}
\right]
\end{eqnarray}
\end{definition}

In this article, we normalize the PageRank centrality so that
  the sum of the centrality values over all nodes is equal to $n$.
Let $\11$ denote  the $n$-dimensional vector of all 1s.
Then, the {\em PageRank centrality} of $G$ is the solution to the
  following {\em Markov random-walk}
   equation \cite{PageBMW98,Haveliwala03}:
\begin{eqnarray}\label{PR}
\PR{{\WW,\alpha}} = \alpha \cdot \11 + (1-\alpha)
  \cdot  \WW^T\left(\DD_{\WW}^{out}\right)^{-1}\PR{{\WW,\alpha}}
\end{eqnarray}

Because $\11 = \sum_u \11_u$, we have: 
\begin{proposition}[PageRank Conforming]\label{prop:basic1}
For any $G = (V,E,\WW)$ and $\alpha > 0$:
\begin{eqnarray}
\PR{{\WW,\alpha}} = \sum_{u\in V} \pp_{u} = 
  \PPR_{{\WW,\alpha}}^T \cdot \11
\end{eqnarray}
\end{proposition}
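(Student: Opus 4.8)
The plan is to establish the two claimed identities separately, both of which follow by summing the defining fixed-point equations over the appropriate index set. First I would verify that $\PR{\WW,\alpha} = \sum_{u \in V} \pp_u$. The key observation is that equation~\eqref{PPR} is \emph{linear} in the starting vector $\ss$: the map $\ss \mapsto \pp_\ss$ is the solution operator of a linear system, namely $\pp_\ss = \alpha\left(\II - (1-\alpha)\WW^T(\DD_\WW^{out})^{-1}\right)^{-1}\ss$, where the matrix inverse exists because $(1-\alpha)\WW^T(\DD_\WW^{out})^{-1}$ has spectral radius at most $1-\alpha < 1$ (its columns are substochastic up to the $(1-\alpha)$ factor). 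Hence $\sum_{u\in V}\pp_u = \pp_{\sum_u \11_u} = \pp_{\11}$, and comparing the defining equation for $\pp_\11$ (equation~\eqref{PPR} with $\ss = \11$) against equation~\eqref{PR} shows they are the same linear system, so $\pp_\11 = \PR{\WW,\alpha}$. Alternatively, and perhaps cleaner for exposition, I would simply sum~\eqref{PPR} over all $u$ with $\ss = \11_u$: the left side gives $\sum_u \pp_u$, the right side gives $\alpha\sum_u \11_u + (1-\alpha)\WW^T(\DD_\WW^{out})^{-1}\sum_u \pp_u = \alpha\cdot\11 + (1-\alpha)\WW^T(\DD_\WW^{out})^{-1}\sum_u\pp_u$, so $\sum_u \pp_u$ satisfies~\eqref{PR}; uniqueness of the solution (again from the spectral radius bound) forces $\sum_u \pp_u = \PR{\WW,\alpha}$.

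Second, I would identify $\sum_{u\in V}\pp_u$ with $\PPR_{\WW,\alpha}^T\cdot\11$. By Definition~\ref{def:PPRM}, the rows of $\PPR_{\WW,\alpha}$ are exactly the vectors $\pp_u^T$, so $\PPR_{\WW,\alpha} = [\pp_1,\dots,\pp_n]^T$ and therefore $\PPR_{\WW,\alpha}^T = [\pp_1,\dots,\pp_n]$, whose columns are the $\pp_u$. Multiplying a matrix whose columns are the $\pp_u$ by the all-ones vector $\11$ simply sums the columns: $\PPR_{\WW,\alpha}^T\cdot\11 = \sum_{u\in V}\pp_u$. This step is purely bookkeeping about the transpose convention in the definition, so it requires only care, not ingenuity.

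The only genuine subtlety — and the step I would flag as the main obstacle — is the appeal to existence and uniqueness of solutions to~\eqref{PPR} and~\eqref{PR}, which is what lets me pass from ``satisfies the equation'' to ``equals the PageRank vector.'' This needs the matrix $\II - (1-\alpha)\WW^T(\DD_\WW^{out})^{-1}$ to be invertible. For that I would note that $\WW^T(\DD_\WW^{out})^{-1}$ is a column-stochastic matrix (each column sums to $1$ by definition of $d_u^{out}$, assuming every out-degree is positive — a standing assumption for PageRank to be well defined, with the usual dangling-node patch otherwise), so $(1-\alpha)\WW^T(\DD_\WW^{out})^{-1}$ has operator norm at most $1-\alpha$ in the appropriate norm, its spectral radius is at most $1-\alpha<1$, and the Neumann series $\sum_{k\ge 0}(1-\alpha)^k(\WW^T(\DD_\WW^{out})^{-1})^k$ converges to the desired inverse. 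Everything else in the proof is a one-line linear-algebra manipulation, so the write-up should be short.
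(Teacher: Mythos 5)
Your argument is correct and is essentially the paper's own: the paper derives the proposition in one line from the linearity of $\ss \mapsto \pp_{\ss}$ together with $\11 = \sum_{u} \11_u$, which is exactly your first identity, and the second identity is the same transpose bookkeeping. The extra details you supply (uniqueness via the Neumann series for $\II - (1-\alpha)\WW^T(\DD_{\WW}^{out})^{-1}$) are left implicit in the paper but are consistent with the power-series expansion it uses later in Eqn.~(\ref{eqn:pprSeries}).
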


Because Markov processes preserve 
   the probability mass of the starting vector, we also have:

\begin{proposition}[Markovian Conforming]\label{prop:basic2}
For any $G = (V,E,\WW)$ and $\alpha > 0$, $\PPR_{{\WW,\alpha}}$ 
 is non-negative and:
\begin{eqnarray}
  \PPR_{{\WW,\alpha}} \cdot \11 = \11
\end{eqnarray}
\end{proposition}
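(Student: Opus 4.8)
The plan is to read off both claims from the defining fixed-point equation~\eqref{PPR} for the personalized PageRank vectors $\pp_u = \pp_{\11_u}$, together with one elementary observation about the matrix $\MM := \WW^T\left(\DD_{\WW}^{out}\right)^{-1}$. The $(v,u)$ entry of $\MM$ is $w_{u,v}/d_u^{out}$, so $\MM$ is entrywise non-negative, and summing over a column gives $\sum_{v} w_{u,v}/d_u^{out} = 1$; hence $\MM$ is column-stochastic, i.e. $\11^T\MM = \11^T$. (This uses that every out-degree is positive, which is exactly what makes $\DD_{\WW}^{out}$ invertible and~\eqref{PPR} meaningful; for the connected undirected graphs that are our eventual target, positivity of all degrees is automatic.)

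First I would establish non-negativity of $\PPR_{\WW,\alpha}$. Rearranging~\eqref{PPR}, the personalized PageRank with start vector $\ss$ satisfies $\left(\II - (1-\alpha)\MM\right)\pp_{\ss} = \alpha\,\ss$. Since $\MM$ is column-stochastic its spectral radius is at most $1$, so $(1-\alpha)\MM$ has spectral radius at most $1-\alpha < 1$; thus $\II - (1-\alpha)\MM$ is invertible with convergent Neumann expansion $\sum_{k\ge 0}(1-\alpha)^k\MM^k$, a sum of entrywise non-negative matrices. Therefore $\pp_{\ss} = \alpha\sum_{k\ge 0}(1-\alpha)^k\MM^k\ss$ is entrywise non-negative whenever $\ss$ is. Taking $\ss = \11_u\ge \00$ shows $\pp_u \ge \00$ for every $u$, and since the rows of $\PPR_{\WW,\alpha}$ are exactly the vectors $\pp_u$ (Definition~\ref{def:PPRM}), the matrix $\PPR_{\WW,\alpha}$ is non-negative.

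Next I would prove the row-sum identity. Left-multiplying~\eqref{PPR} by $\11^T$ and using $\11^T\MM = \11^T$ gives $\11^T\pp_{\ss} = \alpha\,\11^T\ss + (1-\alpha)\,\11^T\pp_{\ss}$, hence $\alpha\,\11^T\pp_{\ss} = \alpha\,\11^T\ss$ and, since $\alpha > 0$, $\11^T\pp_{\ss} = \11^T\ss$. Plugging in $\ss = \11_u$ yields $\sum_{v\in V}\cpr{u}{v} = \11^T\pp_u = \11^T\11_u = 1$ for each $u$. Equivalently, the $u$-th coordinate of $\PPR_{\WW,\alpha}\cdot\11$ is the sum of the $u$-th row of $\PPR_{\WW,\alpha}$, which equals $1$; therefore $\PPR_{\WW,\alpha}\cdot\11 = \11$.

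There is essentially no deep obstacle here: the underlying content is that each $\pp_u$ is a genuine probability distribution — indeed the stationary distribution of the personalized-PageRank Markov chain — so it is non-negative and sums to $1$. The only points needing a moment's care are (i) justifying that~\eqref{PPR} has a unique, well-defined solution, which is handled by the spectral-radius/Neumann-series argument above, and (ii) noting that it is column-stochasticity of $\MM$, rather than the more familiar row-stochasticity, that drives the mass-conservation step; this is an artifact of PageRank being defined via $\WW^T\left(\DD_{\WW}^{out}\right)^{-1}$ acting on column vectors. One could alternatively bypass the algebra entirely by invoking the probabilistic interpretation stated just before Definition~\ref{def:PPRM}, but the fixed-point derivation keeps the argument self-contained.
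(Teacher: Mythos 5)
Your proof is correct, and it is worth noting how it differs in character from the paper's. The paper disposes of this proposition in one sentence, by appealing to the probabilistic semantics: each $\pp_u$ is the stationary distribution of the restarted Markov process, i.e.\ a probability vector, so the rows of $\PPR_{{\WW,\alpha}}$ are automatically non-negative and sum to one (``Markov processes preserve the probability mass of the starting vector''). You instead verify both claims algebraically from the fixed-point equation (\ref{PPR}): column-stochasticity of $\WW^T\left(\DD_{\WW}^{out}\right)^{-1}$ gives mass conservation upon left-multiplying by $\11^T$, and the Neumann series gives non-negativity and, as a bonus, well-posedness (existence and uniqueness of $\pp_{\ss}$), which the paper's one-liner tacitly assumes. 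Your expansion $\pp_{\ss} = \alpha\sum_{k\ge 0}(1-\alpha)^k\bigl(\WW^T(\DD_{\WW}^{out})^{-1}\bigr)^k\ss$ is in fact exactly the series the paper writes down later, as Eqn.~(\ref{eqn:pprSeries}) in the proof of Theorem~\ref{theo:PageRankEssence}, so your argument can be read as front-loading that computation. What the paper's route buys is brevity; what yours buys is self-containedness and an explicit record of where the hypotheses enter ($\alpha>0$ for invertibility, positive out-degrees for $\DD_{\WW}^{out}$ to be invertible). Both are sound.
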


In summary, the PageRank matrix $\PPR_{{\WW,\alpha}}$
  is a special matrix  associated with network $G$ --- its row sum is the
  vector of all 1s and its column sum is the PageRank  centrality of $G$.

\subsection{PageRank Completion of Symmetric Networks}

PageRank centrality and personalized PageRank matrix 
  apply to both directed and undirected weighted graphs.
Both Proposition \ref{prop:basic1} and Proposition \ref{prop:basic2}
  also hold generally.
In this subsection, we will focus mainly on undirected weighted networks.
%  in which $\WW$ satisfies $\WW = \WW^T$.
In such a case, let $\DD_{\WW}$ be the diagonal matrix associated
  with weighted degrees $\dd_{\WW} = \WW\cdot \11$ and
  let $\MM_{\WW} = \DD_{\WW}^{-1}\WW$ be 
  the standard random-walk transition matrix  on $G$.

To state the theorem below, let's first review a basic concept
  of Markov chain.
Recall that a {\em Markov chain} over $V$ is defined by
  an $n\times n$ transition matrix $\MM$ satisfying 
  the {\em stochastic condition}: $\MM$ is non-negative and
$\MM\cdot \11 = \11.$
A probability vector $\ppi$ is the {\em stationary distribution} of
  this Markov process if:
\begin{eqnarray}
\MM^T\ppi = \ppi
\end{eqnarray}
It is well known that 
  every irreducible and ergodic Markov chain has a stationary distribution.
Markov chain $\MM$ is {\em detailed-balanced}  if:
\begin{eqnarray}
\ppi[u] \MM[u,v] = \ppi[v] \MM[v,u], \quad \forall\ u,v\in V
\end{eqnarray}

We will now prove the following structural result:
\begin{theorem}[PageRank Completion]\label{theo:PageRankEssence}
For any weighted directed graph $G = (V,E,\WW)$  and restart constant~$\alpha > 0$:
\begin{itemize}
\item[{\bf A:}]
$\PPR_{{\WW,\alpha}}$ and $\left(\DD_{\WW}^{out}\right)^{-1}\cdot \WW$
  have the same eigenvectors. 
  Thus, both Markov chains have the same stationary distribution.
\item [{\bf B:}] $\PPR_{{\WW,\alpha}}$ is detailed-balanced
  if and only if $\WW$ is symmetric. 
\end{itemize}
Furthermore, when $\WW$ is symmetric, let 
  $\Gbar_{\alpha}= (V,\Ebar_{\alpha},\WWbar_{\alpha})$ be the
  affinity network such that:
\begin{eqnarray}
\WWbar_{\alpha} = \DD_{\WW} \cdot \PPR_{{\WW,\alpha}}  \quad \mbox{\rm and} \quad
\Ebar =\{(u,v): \WWbar_{\alpha}[u,v] > 0\}
\end{eqnarray}
Then, $\Gbar_{\alpha}$ satisfies the following conditions:
\begin{enumerate}
\item {\bf Symmetry Preserving:} 
  $\WWbar^T = \WWbar$, i.e., $\Gbar_{\alpha}$ is an undirected affinity network.
\item {\bf Complete Information}: If $G$ is connected, then
  $\Ebar_{\alpha}$ is a 
  complete graph with $|V|$ self-loops.
\item {\bf Degree and Stationary Preserving}:  
$\WW\cdot \11 = \WWbar\cdot \11$. Thus, $\DD_{\WW} = \DD_{\WWbar}$
  and the random-walk Markov chains $\MM_{\WW}$ and $\MM_{\WWbar}$ 
   have the same stationary distribution.
\item {\bf Markovian and PageRank Conforming}: 
\begin{eqnarray}
\MM_{\WWbar} \cdot \11 = \11
\quad \mbox{\rm and} \quad \MM_{\WWbar}^T\cdot \11 = \PR{{\WW,\alpha}}
\end{eqnarray}
\item {\bf Simultaneously Diagonalizable}:
For any symmetric $\WW$,
  recall  $\LLL_{\WW} = \DD_{\WW}-\WW$ denotes the Laplacian matrix associated
  with $\WW$.
Let $\LL_{\WW} = 
\DD_{\WW}^{-\frac{1}{2}} \LLL_{\WW} \DD_{\WW}^{\frac{1}{2}}  = 
  \II - \DD_{\WW}^{-\frac{1}{2}} \WW  \DD_{\WW}^{-\frac{1}{2}}$
be the normalized Laplacian matrix associated with $\WW$.
Then,  $\LL_{\WW}$ and $\LL_{\WWbar}$ are simultaneously diagonalizable.
\item {\bf Spectral Densification and Approximation}: 
For all $\xx \in \Reals{n}$:
\begin{eqnarray}
\alpha\cdot \LLL_{\WW}  \leq \xx^T \left(\frac{1}{1-\alpha}\cdot \LLL_{\WWbar} \right)\xx \leq \frac{1}{\alpha}\LLL_{\WW}\\
\alpha\cdot \LL_{\WW}  \leq \xx^T \left(\frac{1}{1-\alpha}\cdot\LL_{\WWbar}\right) \xx \leq \frac{1}{\alpha}\LL_{\WW}
\end{eqnarray}
In other words,
  $G$ and $\frac{1}{1-\alpha}\cdot\Gbar_{\alpha}$ are 
  $\frac{1}{\alpha}$-spectrally similar.
\end{enumerate}
\end{theorem}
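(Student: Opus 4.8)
\section*{Proof Proposal}

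\textbf{Overall strategy.} The plan is to reduce everything to the identity $\WWbar_\alpha = \DD_\WW \PPR_{\WW,\alpha}$ together with the defining fixed-point equation (\ref{PPR}) for the personalized PageRank vectors. First I would rewrite that equation in matrix form. Stacking the columns $\pp_u$, one checks that $\PPR_{\WW,\alpha}^T = \alpha \II + (1-\alpha)\MM_\WW^{out}\PPR_{\WW,\alpha}^T$, where $\MM_\WW^{out} = \WW^T(\DD_\WW^{out})^{-1}$; equivalently, in undirected form with $\MM_\WW = \DD_\WW^{-1}\WW$, one gets a clean relation between $\PPR_{\WW,\alpha}$, $\MM_\WW$, and the resolvent $\bigl(\II - (1-\alpha)\MM_\WW\bigr)^{-1}$. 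Concretely I expect $\PPR_{\WW,\alpha} = \alpha\bigl(\II - (1-\alpha)\MM_\WW\bigr)^{-1}$ (with appropriate transposes for the directed case), which is the workhorse for the entire theorem: every claimed property will be read off from this resolvent expression and the spectral decomposition of $\MM_\WW$.

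\textbf{Parts A and B.} For \textbf{A}, since $\PPR_{\WW,\alpha}$ (or its transpose) is a polynomial-limit / rational function of $(\DD_\WW^{out})^{-1}\WW$ via the Neumann series $\alpha\sum_{k\ge 0}(1-\alpha)^k\bigl((\DD_\WW^{out})^{-1}\WW\bigr)^k$, the two matrices commute and share eigenvectors; the stationary distribution, being the Perron eigenvector, is therefore common to both. For \textbf{B}, detailed balance of $\PPR_{\WW,\alpha}$ with respect to its stationary $\ppi$ means $\ppi[u]\,p_{u\to v} = \ppi[v]\,p_{v\to u}$; expanding the resolvent/Neumann series, this reduces term-by-term to detailed balance of $\MM_\WW$ itself, which holds iff $\WW$ is symmetric (standard fact: $\dd_\WW[u]\MM_\WW[u,v] = w_{u,v} = w_{v,u} = \dd_\WW[v]\MM_\WW[v,u]$). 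The ``only if'' direction: if $\PPR_{\WW,\alpha}$ is detailed-balanced, extract the $k=1$ term or differentiate in $\alpha$ near $\alpha = 1$ to recover symmetry of $\WW$.

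\textbf{The six numbered properties (symmetric case).} With $\WW$ symmetric, write $\NN = \DD_\WW^{-1/2}\WW\DD_\WW^{-1/2}$, a symmetric matrix, so $\MM_\WW = \DD_\WW^{-1/2}\NN\DD_\WW^{1/2}$ and $\PPR_{\WW,\alpha} = \DD_\WW^{-1/2}\,\alpha(\II - (1-\alpha)\NN)^{-1}\,\DD_\WW^{1/2}$. Then $\WWbar_\alpha = \DD_\WW\PPR_{\WW,\alpha} = \DD_\WW^{1/2}\,\alpha(\II-(1-\alpha)\NN)^{-1}\,\DD_\WW^{1/2}$, which is manifestly symmetric since $\NN$ is --- that gives \textbf{(1)}. \textbf{(2)}: if $G$ is connected, $\NN$ is irreducible, so $(\II-(1-\alpha)\NN)^{-1} = \sum_k (1-\alpha)^k\NN^k$ has all entries strictly positive (the walk reaches every vertex, including via self-loops from the $k=0$ term), hence $\WWbar_\alpha$ has full support. \textbf{(3)}: $\WWbar_\alpha\onem = \DD_\WW\PPR_{\WW,\alpha}\onem = \DD_\WW\onem = \dd_\WW$ by Proposition \ref{prop:basic2}, so $\DD_{\WWbar} = \DD_\WW$ and the stationary distribution $\propto \dd_\WW$ is preserved. \textbf{(4)}: $\MM_{\WWbar} = \DD_{\WWbar}^{-1}\WWbar = \DD_\WW^{-1}\DD_\WW\PPR_{\WW,\alpha} = \PPR_{\WW,\alpha}$, so $\MM_{\WWbar}\onem = \onem$ (Prop.\ \ref{prop:basic2}) and $\MM_{\WWbar}^T\onem = \PPR_{\WW,\alpha}^T\onem = \PR{\WW,\alpha}$ (Prop.\ \ref{prop:basic1}). \textbf{(5)}: $\LL_{\WW} = \II - \NN$ and $\LL_{\WWbar} = \II - \DD_\WW^{-1/2}\WWbar\DD_\WW^{-1/2} = \II - \alpha(\II-(1-\alpha)\NN)^{-1}$; both are functions of the single symmetric matrix $\NN$, hence diagonalized by the same orthonormal eigenbasis. \textbf{(6)}: in that eigenbasis, if $\NN$ has eigenvalue $\mu\in[-1,1]$ then $\LL_{\WW}$ has eigenvalue $1-\mu$ and $\tfrac{1}{1-\alpha}\LL_{\WWbar}$ has eigenvalue $\tfrac{1}{1-\alpha}\bigl(1 - \tfrac{\alpha}{1-(1-\alpha)\mu}\bigr) = \tfrac{1-\mu}{1-(1-\alpha)\mu}$; since for $\mu\in[-1,1]$ the factor $1/(1-(1-\alpha)\mu)$ lies in $[\alpha, 1/\alpha]$... wait, more carefully $1-(1-\alpha)\mu \in [\alpha, 2-\alpha]$, so the ratio of the $\WWbar$-eigenvalue to the $\WW$-eigenvalue lies in $[1/(2-\alpha), 1/\alpha] \subseteq [\alpha, 1/\alpha]$, giving the sandwiched inequalities for $\LL$; the $\LLL$ version follows by conjugating with $\DD_\WW^{1/2}$, which preserves the quadratic-form ordering since $\DD_{\WW}=\DD_{\WWbar}$.

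\textbf{Main obstacle.} The routine parts are genuinely routine once the resolvent identity is in hand; the one place demanding care is pinning down the exact constants in \textbf{(6)} --- i.e.\ verifying that $g(\mu) := \bigl(1-(1-\alpha)\mu\bigr)^{-1}$ is bounded in $[\alpha, 1/\alpha]$ uniformly over the spectrum $\mu\in[-1,1]$ of $\NN$, and checking that the looser bound $\alpha$ (rather than the sharper $1/(2-\alpha)$) is what the statement claims and is implied. I would also double-check the directed-case bookkeeping in Parts A/B: whether the Neumann series converges requires $(1-\alpha)\|(\DD_\WW^{out})^{-1}\WW\| \le 1-\alpha < 1$, which holds since that matrix is row-stochastic (operator norm $1$ in the appropriate sense), and I should make sure the transposes are placed consistently throughout so that ``same eigenvectors'' in \textbf{A} is the correct reading (left vs.\ right eigenvectors). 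Everything else is linear algebra on a commuting family.
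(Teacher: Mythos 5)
Your proposal is correct and follows essentially the same route as the paper: both rest on the Neumann-series/resolvent identity $\PPR_{\WW,\alpha}=\alpha\sum_{k\ge 0}(1-\alpha)^k\bigl((\DD_{\WW}^{out})^{-1}\WW\bigr)^k$, the symmetry of $\NN=\DD_{\WW}^{-1/2}\WW\DD_{\WW}^{-1/2}$, and the per-eigenvalue ratio $\bigl(1-(1-\alpha)\mu\bigr)^{-1}\in[1/(2-\alpha),1/\alpha]$ for Condition (6). The only place you should tighten is the ``only if'' of Part B, where for a \emph{fixed} $\alpha$ the clean argument is to invert the resolvent (symmetry of $\PiPi(\II-(1-\alpha)\MM)^{-1}$ forces symmetry of $\MM\PiPi^{-1}$, hence detailed balance of $\MM$) rather than differentiating in $\alpha$; otherwise everything matches.
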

%We call network $\Gbar= (V,\Ebar,\WWbar)$ the {\em PageRank essence} of   $G = (V,E,\WW)$.
\noindent{\bf Remarks}
{\em We rescale $\LLL_{\WWbar}$ and $\LL_{\WWbar}$
  by $\frac{1}{1-\alpha}$ because $\Gbar_{\alpha}$ has self-loops of 
  magnitude $\alpha \DD_{\WW}$. 
In other words,
 $\Gbar_{\alpha}$ only uses $(1-\alpha)$ fraction of its 
 weighted degrees for connecting different nodes in $V$.}

\begin{proof}
Let $n = |V|$.
For any initial distribution $\ss$ over $V$, we 
  can explicitly express $\pp_{\ss}$ as:
\begin{eqnarray}\label{eqn:pprSeries}
 \pp_{\ss} = \alpha \sum_{k=0}^{\infty} (1-\alpha)^k 
\cdot  \left(\WW^T\cdot\left(\DD_{\WW}^{out}\right)^{-1}\right)^k \cdot \ss
 \end{eqnarray}

Consequently: we can express $\PPR_{{\WW,\alpha}}$ as:
 \begin{eqnarray}\label{eqn:pprMatrixSeries}
\PPR_{{\WW,\alpha}}
 = \alpha \sum_{k=0}^{\infty} (1-\alpha)^k 
\cdot  \left(\left(\DD_{\WW}^{out}\right)^{-1}\cdot \WW\right)^k
 \end{eqnarray}
Note that $\alpha \sum_{k=0}^{\infty} (1-\alpha)^k  = 1$.
Thus, $\PPR_{{\WW,\alpha}}$ is 
  a convex combination of (multi-step) random-walk 
  matrices defined by $\left(\DD_{\WW}^{out}\right)^{-1}\cdot\WW$.
Statement A %and Proposition \ref{prop:basic2} 
  follows directly from the fact that 
  $\left(\left(\DD_{\WW}^{out}\right)^{-1}\cdot\WW\right)^k$ 
  is a stochastic matrix for any integer  $k\geq 0$.

The following fact is well known \cite{aldous2002reversible}:
\begin{quote}
{\em 
Suppose $\MM$ is a Markov chain with stationary distribution $\ppi$.
Let $\M{\Pi}$ be the diagonal matrix defined by $\ppi$.
Then, $\MM^T\M{\Pi}$ is symmetric if and only if
  the Markov process defined by $\MM$ is  detailed balanced.
}
\end{quote}

We now assume $\WW = \WW^T$.
Then,  Eqn. (\ref{eqn:pprMatrixSeries}) becomes:
 \begin{eqnarray}\label{eqn:pprMatrixSeriesSym}
\PPR_{{\WW,\alpha}}
 = \alpha \sum_{k=0}^{\infty} (1-\alpha)^k 
\cdot  \left(\DD_{\WW}^{-1}\cdot \WW\right)^k
 \end{eqnarray}
The stationary distribution of $\DD_{\WW}^{-1}\WW$ --- and 
 hence of $\PPR_{{\WW,\alpha}}$ --- 
  is proportional to $\dd = \WW\cdot\11$.
$\PPR_{{\WW,\alpha}}$ is detailed balanced because 
  $\WWbar =  \DD_{\WW}\cdot \PPR_{{\WW,\alpha}}$ is a symmetric matrix.
Because $\left(\left(\DD_{\WW}^{out}\right)^{-1}\cdot \WW\right)^k$ (for all positive integers) have
  a common stationary distribution, $\PPR_{{\WW,\alpha}}$ is not detailed balanced when 
  $\WW$ is not symmetric.
It is also well known --- by Eqn. (\ref{eqn:pprSeries}) --- that
 for all $u,v\in V$, 
  $\PPR_{\WW,\alpha}[u,v]$ is equal to the probability that a run 
  of random walk starting at $u$ passes by $v$ immediately 
  before it restarts.
Thus, when $G$ is connected, 
  $\PPR_{{\WW,\alpha}}[u,v]>  0$ for all $u,v\in V$.
Thus, $\nnz(\WWbar_{\alpha}) = n^2$, and 
  $\Ebar_{\alpha}$, the nonzero pattern of  $\WWbar_{\alpha}$, is a 
  complete graph with $|V|$ self-loops.
We have now established Condition B and Conditions 1 - 4.

We now prove Conditions 5 and 6.\footnote{Thanks to Dehua Cheng 
 of USC for assisting this proof.}
Recall that when $\WW = \WW^T$, 
  we can express the personalized PageRank matrix as:
 \begin{eqnarray*}
\PPR_{{\WW,\alpha}}   = \alpha \sum_{k=0}^{\infty} (1-\alpha)^k 
\cdot  \left(\DD_{\WW}^{-1}\cdot\WW\right)^k.
 \end{eqnarray*}

Thus: % with $\MM_{\WW} = \WW\DD_{\WW}^{-1}$, we have:
\begin{eqnarray*}
\WWbar_{\alpha}   =  \DD_{\WW}\cdot \PPR_{{\WW,\alpha}}
  =  \left(\alpha \sum_{k=0}^{\infty} (1-\alpha)^k \cdot  
  \DD_{\WW}\cdot \left(\DD_{\WW}^{-1}\WW\right)^k\right).
\end{eqnarray*}
We compare the Laplacian matrices associated with $\WW$ and $\WWbar$:
\begin{eqnarray*}
\LLL_{\WW} & = &\DD_{\WW} - \WW = \DD_\WW^{1/2}
  \left(\II - \DD_{\WW}^{-1/2}\WW\DD_{\WW}^{-1/2}\right)\DD_{\WW}^{1/2}
 = \DD_\WW^{1/2}  \LL_{\WW}\DD_{\WW}^{1/2}.\\
\LLL_{\WWbar} & = &\DD_{\WW} - \WWbar 
 =  \DD_\WW^{1/2}  \LL_{\WWbar}\DD_{\WW}^{1/2}
\end{eqnarray*}
where 
\begin{eqnarray*}
\LL_{\WWbar} = \II - \alpha \sum_{k=0}^{\infty} (1-\alpha)^k \cdot 
(\DD_{\WW}^{-1/2}\WW\DD_{\WW}^{-1/2})^{k}.
\end{eqnarray*}
Let $\lambda_1 \geq \lambda_{2} \geq ... \geq \lambda_{n}$ be the $n$ eigenvalues
  of $\DD_{\WW}^{-1/2}\WW\DD_{\WW}^{-1/2}$.
Let $\uu_1, ..., \uu_n$  
  denote the unit-length eigenvectors of $\DD_{\WW}^{-1/2}\WW\DD_{\WW}^{-1/2}$ 
  associated with 
  eigenvalues $\lambda_1, \cdots, \lambda_n$, respectively.
We have $|\lambda_i|\leq 1$.
Let $\M{\Lambda}$ be the diagonal matrix associated with $(\lambda_1,...,\lambda_n)$ and $\M{U} = [\uu_1,...,\uu_n]$.
By the spectral theorem --- i.e., the eigenvalue decomposition for symmetric matrices ---
  we have:
\begin{eqnarray}
\UU^T\DD_{\WW}^{-1/2}\WW\DD_{\WW}^{-1/2}\UU & = & \M{\Lambda} \\
\UU \UU^T  =  \UU^T \UU  & = & \II
\end{eqnarray}
Therefore:
\begin{eqnarray*}
\LLL_{\WW} & = &
 \DD_\WW^{1/2}\UU\UU^T
  \left(\II - \DD_{\WW}^{-1/2}\WW\DD_{\WW}^{-1/2}\right)\UU\UU^T\DD_{\WW}^{1/2}\\
& = & 
 \DD_\WW^{1/2}\UU
  \left(\II - \UU^T\DD_{\WW}^{-1/2}\WW\DD_{\WW}^{-1/2}\UU\right)\UU^T\DD_{\WW}^{1/2}\\
 & =& \DD_\WW^{1/2}\UU
  \left(\II - \M{\Lambda}\right)\UU^T\DD_{\WW}^{1/2}.
\end{eqnarray*}
Similarly:
\begin{eqnarray*}
\LLL_{\WWbar_{\alpha}} & = & \DD_{\WW} - \WWbar_{\alpha} 
= \DD_\WW^{1/2}  \LL_{\WWbar}\DD_{\WW}^{1/2}\\
& = & 
\DD_\WW^{1/2}
\left(
\II - \alpha \sum_{k=0}^{\infty} (1-\alpha)^k \cdot 
 (\DD_{\WW}^{-1/2}\WW\DD_{\WW}^{-1/2})^{k}\right)
\DD_{\WW}^{1/2}\\
& = & 
\DD_\WW^{1/2}\UU 
\left(
\II - \alpha \sum_{k=0}^{\infty} (1-\alpha)^k \cdot 
\UU^T (\DD_{\WW}^{-1/2}\WW\DD_{\WW}^{-1/2})^{k}\UU\right)
\UU^T\DD_{\WW}^{1/2}\\
& = & \DD_\WW^{1/2}\UU 
\left(
\II - \alpha \sum_{k=0}^{\infty} (1-\alpha)^k \cdot \M{\Lambda}^k
\right)
\UU^T\DD_{\WW}^{1/2}\\
& = & \DD_\WW^{1/2}\UU 
\left(
\II - \frac{\alpha}{\II - (1-\alpha)\M{\Lambda}}
\right)
\UU^T\DD_{\WW}^{1/2}.
\end{eqnarray*}
The derivation above has proved  Condition (5).
To prove Condition (6), consider an arbitrary
  $\xx \in \Reals{n}\setminus\{\00\}$.
With $\yy = \UU^T\DD_{\WW}^{1/2}\xx$, we have:
\begin{eqnarray*}
\frac{\xx^T\frac{1}{1-\alpha}\LLL_{\WWbar}\xx}{\xx^T\LLL_{\WW}\xx}
& = & \frac{1}{1-\alpha}\cdot \frac{\xx^T\DD_{\WW}^{1/2}\UU
\left(\II - \frac{\alpha}{\II - (1-\alpha)\M{\Lambda}}\right)
 \UU^T\DD_{\WW}^{1/2}\xx}
{\xx^T\DD_{\WW}^{1/2}\UU\left(\II - \M{\Lambda}\right)\UU^T\DD_{\WW}^{1/2}\xx}\\
%= \frac{\yy\frac{1}{1-\alpha}\LL_{\WWbar}\yy}{\yy^T\LLL_{\WW}\yy}
& = & \frac{1}{1-\alpha}\cdot \frac{\yy^T
\left(\II - \frac{\alpha}{\II - (1-\alpha)\M{\Lambda}}\right)
\yy}{\yy^T\left(\II - \M{\Lambda}\right)\yy}
\end{eqnarray*}
This ratio is in the interval of:
\[
\left[\inf_{\lambda: |\lambda|\leq 1} \frac{1}{1-(1-\alpha)\lambda},
\sup_{\lambda: |\lambda|\leq 1} \frac{1}{1-(1-\alpha)\lambda}\right]
= \left[\frac{1}{2-\alpha},\frac{1}{\alpha}\right].
\]
\end{proof}

\subsection{PageRank Completion, Community Identification, and Clustering}

PageRank completion has an immediate application to 
  the community-identification approaches developed in \cite{BCMTITCS,B3CT}.
This family of methods first constructs a preference network from
  an input weighted graph $G = (V,E,\WW)$.
It then applies various social-choice aggregation functions 
  \cite{ArrowBook} to define
  network communities \cite{BCMTITCS,B3CT}.
In fact, Balcan {\em et al} \cite{B3CT} show that 
 the PageRank completion of $G$  provides a wonderful scheme
  (see also  in Definition \ref{Interpretation})
  for constructing preference networks from affinity networks.

In addition to its classical connection with PageRank centrality, 
  PageRank completion also has 
  a direct connection with network clustering.
To illustrate this connection, let's recall a well-known  approach in spectral graph theory for clustering
\cite{Cheeger,SpielmanTengSpectral,LovaszSimonovits,SpielmanTengLocalClustering,AndersenChungLang}:

\begin{algorithm}[H]
\caption*{{\bf Algorithm}: {\sf Sweep}$(G, \vv)$}
\nonumber
\begin{algorithmic}[1]
\REQUIRE $G = (V,E,\WW$)  and  $\vv \in \Reals{|V|}$

\STATE Let $\ppi$ be an ordering of $V$ according to $\vv$, i.e.,
 $\forall k\in [n-1]$,  $\vv[\pi(k)] \geq \vv[\pi(k+1)]$
\STATE Let $S_k = \{\pi(1),...,\pi(k)\}$

\STATE
Let $k^* = {\rm arg min}_k\ \conductance_{\WW}(S_k)$. 
\STATE Return $S_{k^*}$
\end{algorithmic}
\end{algorithm}

Both in theory and in practice, 
  the most popular vectors used in {\sf Sweep} are:
\begin{itemize}
\item {\bf Fiedler vector:} the eigenvector associated with the
  second smallest eigenvalue of the Laplacian matrix $\LLL_{\WW}$ \cite{FiedlerAlg,Fiedler,SpielmanTengSpectral}.
\item {\bf Cheeger vector:} $\DD_\WW^{-1/2}\V{v}_2$, where $\V{v}_2$ is the eigenvector associated with the
  second smallest eigenvalue of the normalized Laplacian matrix $\LL_{\WW}$ 
  \cite{Cheeger,ChungBook}.
\end{itemize}

The sweep-based clustering method and Fiedler/Cheeger vectors
  are the main subject of following
   beautiful theorem \cite{Cheeger} in spectral graph theory:

\begin{theorem}[Cheeger's Inequality]
\label{theo:Cheeger}
For any symmetric weighted graph $G = (V,E,\WW)$,  
  let $\lambda_{2}$ be the second smallest eigenvalue of
 the normalized Laplacian matrix $\LL_{\WW}$ of $G$.
%$$\LL_{\WW} = \DD^{-1/2}_{\WW}(\DD_{\WW} - \WW)\DD^{-1/2}_{\WW} = \II - \DD^{-1\/2}_{\WW} \WW \DD^{-1/2}_{\WW}.$$
Let $\V{v}_2$ be the eigenvector associated with $\lambda_2$
   and $S = {\sf Sweep}(G, \DD_\WW^{-1/2} \vv_2)$.
Then: %, $\lambda_{2}$ satisfies the following inequalities:
\begin{eqnarray}
\frac{\lambda_2}{2} \leq \conductance_{\WW}(S) \leq \sqrt{2\lambda_{2}}
\end{eqnarray}
%% \begin{eqnarray}
%% \frac{\conductance_{\WW}(S)^{2}}{2} \leq \lambda_{2}
%%   \leq 2\cdot \conductance_{\WW}(S)
%% \end{eqnarray}
%where $\phi_{\WW}:=\min_{S\in V} \left( \conductance_{\WW}(S)\right)$   denotes the conductance  of the ``best'' cluster in $G$.
\end{theorem}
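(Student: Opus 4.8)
I would prove the two bounds separately: the left inequality $\lambda_2/2\le\conductance_\WW(S)$ is the easy variational half, while the right inequality $\conductance_\WW(S)\le\sqrt{2\lambda_2}$ --- the substance of the theorem --- comes from a rounding argument that extracts a low-conductance threshold cut from the Cheeger vector $\vv=\DD_\WW^{-1/2}\V{v}_2$. The workhorse is the variational (Courant--Fischer) characterization: since $\LLL_\WW=\DD_\WW^{1/2}\LL_\WW\DD_\WW^{1/2}$ and $\DD_\WW^{1/2}\11$ spans the kernel of $\LL_\WW$, one has $\lambda_2=\min\{\xx^T\LLL_\WW\xx/(\xx^T\DD_\WW\xx):\xx\neq\00,\; \xx^T\DD_\WW\11=0\}$, attained by $\vv$, where $\xx^T\LLL_\WW\xx=\sum_{(u,v)\in E}w_{u,v}(x_u-x_v)^2$. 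For the lower bound I would let $S^{*}$ realize $\phi_G:=\min_{\emptyset\neq T\subsetneq V}\conductance_\WW(T)$ (normalized so $\vol(S^{*})\le\vol(V)/2$) and plug in the test vector $\xx$ equal to $1/\vol(S^{*})$ on $S^{*}$ and $-1/\vol(V\setminus S^{*})$ elsewhere; then $\xx^T\DD_\WW\11=0$ and a one-line computation gives $\xx^T\LLL_\WW\xx/(\xx^T\DD_\WW\xx)=\cut(S^{*},V\setminus S^{*})\,(1/\vol(S^{*})+1/\vol(V\setminus S^{*}))\le 2\,\conductance_\WW(S^{*})$, so $\lambda_2\le 2\phi_G$, and since the sweep output $S$ is a proper nonempty subset, $\conductance_\WW(S)\ge\phi_G\ge\lambda_2/2$.

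\noindent\textbf{Upper bound, reduction to a nonnegative vector.} Next I would massage the Cheeger vector into usable shape. Form $\hat\vv=\vv-c\,\11$ where $c$ is a $\DD_\WW$-weighted median of the entries of $\vv$, so that both $\{u:\hat v_u>0\}$ and $\{u:\hat v_u<0\}$ have volume at most $\vol(V)/2$. Since $\xx\mapsto\xx^T\LLL_\WW\xx$ is unchanged by adding multiples of $\11$ whereas $\xx^T\DD_\WW\xx$ only grows, $\hat\vv$ still has Rayleigh quotient at most $\lambda_2$. Splitting $\hat\vv=\hat\vv^+-\hat\vv^-$ into its positive and negative parts, an edgewise case check shows $\hat\vv^{+T}\LLL_\WW\hat\vv^++\hat\vv^{-T}\LLL_\WW\hat\vv^-\le\hat\vv^T\LLL_\WW\hat\vv$ while the $\DD_\WW$-quadratic form splits with equality, so by a mediant inequality at least one of $\hat\vv^+,\hat\vv^-$ has Rayleigh quotient at most $\lambda_2$. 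Choosing the sign of $\V{v}_2$ so that this is the positive part, I set $\yy:=\hat\vv^+\ge\00$, which then satisfies $\vol(\support(\yy))\le\vol(V)/2$ and $R_\WW(\yy):=\yy^T\LLL_\WW\yy/(\yy^T\DD_\WW\yy)\le\lambda_2$.

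\noindent\textbf{Upper bound, the rounding lemma.} I would then prove: for any $\yy\ge\00$ with $\vol(\support(\yy))\le\vol(V)/2$ there is a threshold set $S_t=\{u:y_u>t\}$ with $\conductance_\WW(S_t)\le\sqrt{2\,R_\WW(\yy)}$. Draw the threshold so that $t^2$ is uniform on $[0,\max_u y_u^2]$. Then $\mathbb{E}[\cut(S_t,V\setminus S_t)]$ and $\mathbb{E}[\vol(S_t)]$ both carry the factor $(\max_u y_u^2)^{-1}$, which cancels in the ratio, leaving $\sum_{(u,v)\in E}w_{u,v}\,|y_u^2-y_v^2|$ over $\sum_u d_u y_u^2$. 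Writing $|y_u^2-y_v^2|=|y_u-y_v|\,(y_u+y_v)$ and applying Cauchy--Schwarz, then $(y_u+y_v)^2\le 2(y_u^2+y_v^2)$ and $\sum_{(u,v)\in E}w_{u,v}(y_u^2+y_v^2)=\sum_u d_u y_u^2$, bounds the numerator by $(\yy^T\LLL_\WW\yy)^{1/2}(2\,\yy^T\DD_\WW\yy)^{1/2}$, so $\mathbb{E}[\cut(S_t,V\setminus S_t)]/\mathbb{E}[\vol(S_t)]\le\sqrt{2R_\WW(\yy)}$ and hence some realized $t$ achieves $\cut(S_t,V\setminus S_t)/\vol(S_t)\le\sqrt{2R_\WW(\yy)}$; the support hypothesis makes $\vol(S_t)=\min(\vol(S_t),\vol(V\setminus S_t))$, so this is $\conductance_\WW(S_t)$. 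Applying this with $\yy=\hat\vv^+$ gives a cut of conductance at most $\sqrt{2\lambda_2}$; and since each such set $\{u:\hat v^+_u>t\}=\{u:v_u>c+t\}$ is a prefix $S_k$ in the $\vv$-ordering used inside ${\sf Sweep}$, the conductance-minimizing prefix returned by ${\sf Sweep}(G,\DD_\WW^{-1/2}\V{v}_2)$ does at least as well, i.e. $\conductance_\WW(S)\le\sqrt{2\lambda_2}$.

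\noindent\textbf{Main obstacle.} The crux is the rounding lemma, specifically getting the factor $\sqrt{2}$ instead of a trivial linear bound: this hinges on using exactly the threshold law for which $\mathbb{E}[\cut]$ telescopes to $\sum w_{u,v}(y_u^2-y_v^2)$, together with the Cauchy--Schwarz step that trades one factor $|y_u-y_v|$ for a factor $\sqrt{R_\WW(\yy)}$. The subsidiary technicalities are all in the reduction step: the weighted-median shift and the passage to $\hat\vv^\pm$ must deliver a nonnegative vector whose support has volume at most $\vol(V)/2$ and whose Rayleigh quotient is still $\le\lambda_2$, and one must check that its threshold cuts are among the prefixes examined by {\sf Sweep} (for the opposite sign of $\V{v}_2$ one instead invokes $\conductance_\WW(S)=\conductance_\WW(V\setminus S)$).
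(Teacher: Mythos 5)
The paper does not prove this theorem: it states Cheeger's inequality as a known classical result and cites the literature, so there is no in-paper argument to compare against. Your proposal is the standard, correct proof of the discrete Cheeger inequality --- the test-vector computation for the easy direction $\lambda_2 \le 2\phi_G$, and, for the hard direction, the $\DD_\WW$-weighted median shift, the splitting into positive and negative parts with the mediant inequality, and the threshold rounding with $t^2$ uniform combined with Cauchy--Schwarz --- together with the correct observation that the resulting threshold sets (or their complements, handled via $\conductance_\WW(S)=\conductance_\WW(V\setminus S)$) are among the prefixes examined by {\sf Sweep}, so the sweep output does at least as well.
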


By Theorem \ref{theo:PageRankEssence}, 
 the normalized Laplacian matrices of $G$ and its PageRank completion
  are simultaneously diagonalizable.
Thus, we can also use the eigenvector of the PageRank completion of $G$
  to identify a cluster of $G$ whose conductance is guaranteed by the 
  Cheeger's inequality.

{\em \begin{quote}                                                      
Then, how is the PageRank completion necessarily a better representation
  of the information contained in the original network?        

For example, with respect to network clustering, what desirable properties
  does the PageRank completion have that the original graph doesn't?
\end{quote}}

While we are still looking for a comprehensive answer to these questions,
  we will now use the elegant result of 
  Andersen, Chung, and Lang \cite{AndersenChungLang}
  to illustrate that the PageRank completion indeed contains more {\em direct}
  information about network clustering than the original data $\WW$.
Andersen {\em et al} proved that if one applies sweep 
to vectors $\{\DD^{-1}_{\WW}\cdot \pp_v\}_{v\in V}$, 
  then one can obtain a cluster whose conductance is nearly as small as
  that guaranteed by Cheeger's inequality.
Such a statement does not hold for the rows in
   the original network data $\WW$, particularly 
  when $\WW$ is sparse.

In fact, the result of 
  Andersen, Chung, and Lang \cite{AndersenChungLang}
  is much stronger.
They showed that for any cluster $S \subset V$, 
  if one selects a random node $v\in S$ with probability
   proportional to the weighted degree $d_v$ of the node, 
   then, with probability at least $1/2$, 
   one can identify a cluster $S'$ of conductance at most 
  $O(\sqrt{\conductance_{\WW}(S)\log n})$ by
  applying sweep to vector $\DD^{-1}_{\WW}\cdot \pp_v$.
In other words, the row vectors in the PageRank completion --- i.e., 
  the personalized PageRank vectors that 
  represent the individual data associated with nodes ---
  have rich and direct information about network clustering 
  (measured by conductance).
This is a property that the original network data simply doesn't  have,
  as one is usually not able to  identify good clusters
  directly from the individual rows of $\WW$.

In summary, Cheeger’s inequality 
  and its algorithmic proof can be viewed as the mathematical foundation 
  for {\em global} spectral partitioning, 
  because the Fiedler/Cheeger vectors are 
  formulated from the network data as a whole.
From this global perspective, both 
  the original network and its PageRank completion are equally effective.
In contrast, from the local perspective of individual-row data,
  Andersen, Chung, and Lang's result highlights the 
  effectiveness of the PageRank completion to
  {\em local clustering} \cite{SpielmanTengLocalClustering}:
The row data associated with nodes in the PageRank completion 
  provides effective information for identifying good clusters.
Similarly, from the corresponding column in the PageRank completion,
  one can also directly and ``locally'' obtains
  each node's PageRank centrality.
In other words, PageRank completion transforms the input network data $\WW$
  into a ``complete-information'' network model $\WWbar$, and 
  in the process, it distilled the centrality/clusterability information
   implicitly embedded globally in $\WW$
   into an ensemble of nodes' ``individual'' network data
   that explicitly encodes the centrality information and 
   locally capturing the clustering structures.

\section{Connecting Multifaceted Network Data}\label{sec:RW}

The formulations highlighted in 
  Section \ref{sec:RichnessOther},
   such as the cooperative, incentive, powerset, and preference models,
  are just a few examples of network models beyond the traditional
  graph-based framework.
Other extensions include the 
  popular probabilistic graphical model  \cite{KollerFriedman} and
  game-theoretical graphical model \cite{KearnsGraphicalGame,DGP,ChenDengTeng}.
These models use relatively homogeneous node and edge types,
  but nevertheless represent a great source of expressions
  for multifaceted and multimodal network data.

While diverse network models enable us to express multifaceted network data,
  we need mathematical and algorithmic tools to connect them.
For some applications such as community identification, one may need to properly 
  use some data facets as metadata to evaluate or cross validate 
  the network solution(s) identified from the main network facets \cite{MetaData}.

\begin{quote}
{\em
But more broadly, for many real-world network analysis tasks, we need 
  a systematic approach to {\em network composition}
 whose task is to integrate the multifaceted data into a single effective
   network worldview.
Towards this goal, a basic theoretical step in multifaceted network analysis is 
  to establish a unified worldview for capturing 
  multifaceted network data expressed in various models.}
\end{quote}

Although fundamental, formulating a unified worldview 
  of network models is still largely an outstanding research problem.
In this section, 
  we sketch our preliminary studies in using Markov chains
  to build a “common platform” for the network models discussed 
  in Section \ref{sec:RichnessOther}. 
We hope this study will inspire a general theory for data 
  integration, network composition, and multifaceted network analysis.
We also hope that it will
   help to strengthen the connection between various fields,
  as diverse as statistical modeling, geometric embedding,
  social influence, network dynamics, game theory, and 
  social choice theory,
  as well as various application domains
  (protein-protein interaction, viral marketing,
  information propagation, electoral behavior,
  homeland security, healthcare, {\em etc.}),
  that have provided different but valuable techniques and
  motivations to network analysis.

\subsection{Centrality-Conforming Stochastic Matrices of Various Network Models}
\label{sec:CentralityConfirming}

Markov chain --- a basic statistical model --- is also 
 a fundamental network concept.
For a weighted network $G = (V,E,\WW)$,
  the standard random-walk transition  
 $\left(\DD_{\WW}^{out}\right)^{-1}\cdot \WW$ 
  is the most widely-used stochastic matrix associated with $G$.
Importantly, Section \ref{Sec:NE} illustrates that
 other Markov chains
 --- such as PageRank Markov chain $\PPR_{{\WW,\alpha}}$ ---
  are also natural with respect to network data $\WW$.
Traditionally, a Markov chain is characterized by
  its stochastic condition, stationary distribution, mixing time, and
  detailed-balancedness.
Theorem \ref{theo:PageRankEssence} highlights another
  important feature of Markov chains in the context of 
  network analysis: 
{\em The PageRank Markov chain is conforming with respect to 
   PageRank centrality,} that is, 
  for any network $G = (V,E,\WW)$ and $\alpha > 0$, we have:
$$\PPR_{{\WW,\alpha}}^T\cdot \11 = \PR{{\WW,\alpha}}.$$
%% Thus, in our study of Markov chains for other network models, 
%%   we will be especially interested in those
%%  that are connected with well-known centrality measures 
%%  of the underlying network models.

\begin{quote}
{\em How should we derive stochastic matrices from other network models?
Can we construct Markov chains that are
  centrality-confirming with respect to
  natural centrality measures of these network models?
}
\end{quote}

In this section, 
  we will examine some centrality-confirming Markov chains that can be
  derived from network data given by 
  preference/incentive/cooperative/powerset  models.
%  from network data  in other network models.
\subsubsection*{{\sc The Preference Model}}

For the preference model, there is a family of natural Markov
  chains, based on weighted 
  aggregations in social-choice theory \cite{ArrowBook}.
For a fixed $n$, let $\ww  \in (\mathbb{R^+}\cup \{0\})^n$
  be a non-negative and monotonically non-increasing vector.
For the discussion below, we will assume that $\ww$ is normalized 
  such that $\sum_{i=1}^n \ww[i] = 1$.
For example, while the famous Borda count \cite{YoungBorda} 
  uses $\ww = [n,n-1,...,1]^T$,
  the normalized Borda count uses $\ww = [n,n-1,...,1]^T/{n \choose 2}$.
%We will focus on the original preference networks defined by  Balcan {\em et al.} \cite{B3CT}: %, in which all individual  preferences are total orders.

\begin{proposition}[Weighted Preference Markov Chain]
Suppose $A = (V,\Pi)$ is a {\em preference network} 
  over $V = [n]$ and
  $\ww$ is non-negative and monotonically non-increasing weight vector,
  with $||\ww||_1 = 1$.
Let $\MM_{A,\ww}$ be the matrix  in which for each $u\in V$, the
  $u^{th}$ row of $\MM_{A,\ww}$ is: 
$$\V{\pi}_u\circ \ww = [\ww[\V{\pi}_u(1)],...,\ww(\V{\pi}_u(n))].$$
Then, $\MM_{A,\ww}$  defines a Markov chain, i.e.,
   $\MM_{A,\ww}\11 = \11$.
\end{proposition}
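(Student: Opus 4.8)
The plan is to verify directly that each row of $\MM_{A,\ww}$ is a probability vector, since the stochastic condition $\MM_{A,\ww}\11 = \11$ is precisely the statement that each row sums to $1$. Fix a node $u \in V$. By definition, the $u^{th}$ row of $\MM_{A,\ww}$ is the vector $\V{\pi}_u \circ \ww$ whose $v^{th}$ entry is $\ww[\V{\pi}_u(v)]$, where $\V{\pi}_u$ is the linear order that $u$ assigns to $V$, viewed as a permutation of $[n]$ (so $\V{\pi}_u(v)$ is the rank that $u$ gives to node $v$). First I would observe that as $v$ ranges over all of $V = [n]$, the value $\V{\pi}_u(v)$ ranges over all of $[n]$ exactly once, because $\V{\pi}_u$ is a bijection. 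Hence
\[
\sum_{v\in V} \left(\V{\pi}_u\circ\ww\right)[v] \;=\; \sum_{v\in V} \ww[\V{\pi}_u(v)] \;=\; \sum_{i=1}^{n} \ww[i] \;=\; \|\ww\|_1 \;=\; 1,
\]
where the middle equality is just reindexing the sum by $i = \V{\pi}_u(v)$.

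Next I would package this over all rows: the $u^{th}$ entry of the vector $\MM_{A,\ww}\11$ is exactly the row sum computed above, namely $1$, so $\MM_{A,\ww}\11 = \11$. Since $\ww$ is assumed non-negative, every entry of $\MM_{A,\ww}$ is non-negative, and therefore $\MM_{A,\ww}$ satisfies the stochastic condition stated earlier in the excerpt and defines a Markov chain over $V$. The monotonicity hypothesis on $\ww$ plays no role in this particular claim (it matters for interpreting the chain as a social-choice aggregation and for the centrality-conforming properties discussed later), so I would not invoke it here.

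There is essentially no obstacle: the only thing to be careful about is the bookkeeping convention for $\V{\pi}_u$ — whether $\V{\pi}_u(k)$ denotes ``the node ranked $k$-th by $u$'' or ``the rank that $u$ assigns to node $k$.'' Under either reading the composition $\V{\pi}_u \circ \ww$ is a permutation of the multiset of entries of $\ww$, so the row sum is $\|\ww\|_1 = 1$ regardless; I would state the convention explicitly at the start of the proof to avoid ambiguity and then the computation is immediate.
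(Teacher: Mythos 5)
Your proof is correct and is essentially the paper's own argument: the paper simply notes that each row of $\MM_{A,\ww}$ is a permutation of $\ww$ and that permutations preserve the $L_1$-norm, which is exactly the reindexing computation you carry out explicitly. Your added remarks on non-negativity and on the irrelevance of the monotonicity hypothesis are accurate but not needed beyond what the paper states.
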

\begin{proof}
$\MM_{A,\ww}$ is a stochastic matrix
  because each row of $\MM_{A,\ww}$ is a permutation of $\ww$, 
  and permutations preserve the L1-norm of the vector.
\end{proof}

Social-choice aggregation based on $\ww$ also defines
  the following natural centrality measure, 
  which can be viewed as the collective ranking over $V$ based on the
  preference profiles of $A = (V,\Pi)$: %  from social choice theory \cite{ArrowBook}:
\begin{eqnarray}
\centrality_{\Pi,\ww}[v] = \sum_{u\in V} \ww[\pi_u(v)]
\end{eqnarray}

Like PageRank Markov chains, 
  weighted preference Markov chains
  also enjoy the centrality-conforming property:
\begin{proposition}
For any preference network $A=(V,\Pi)$, in which $\Pi \in \LV^{|V|}$:
\begin{eqnarray}
\MM_{A,\ww}^T \cdot \11 = \centrality_{\Pi,\ww}
\end{eqnarray}
\end{proposition}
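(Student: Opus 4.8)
The plan is to simply unwind the three definitions involved --- the transpose, the product with the all-ones vector $\11$, and the centrality measure $\centrality_{\Pi,\ww}$ --- exactly as in the proof of Proposition \ref{prop:basic1}. First I would recall the elementary fact that for any $n\times n$ matrix $\MM$ and any $v\in V$, the $v$-th coordinate of $\MM^T\cdot\11$ is the $v$-th column sum of $\MM$, i.e. $(\MM^T\cdot\11)[v] = \sum_{u\in V}\MM[u,v]$. Hence the asserted vector identity $\MM_{A,\ww}^T\cdot\11 = \centrality_{\Pi,\ww}$ is equivalent to the coordinatewise statement that, for every $v\in V$, the $v$-th column sum of $\MM_{A,\ww}$ equals $\centrality_{\Pi,\ww}[v] = \sum_{u\in V}\ww[\pi_u(v)]$.

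Next I would read off the entries of $\MM_{A,\ww}$ from its definition. By construction the $u$-th row of $\MM_{A,\ww}$ is $\V{\pi}_u\circ\ww = [\ww[\pi_u(1)],\ldots,\ww[\pi_u(n)]]$, so its $v$-th coordinate is precisely $\MM_{A,\ww}[u,v] = \ww[\pi_u(v)]$, the Borda-type score that voter $u$ assigns to alternative $v$. Substituting this into the column-sum expression yields, for every $v\in V$,
\[
(\MM_{A,\ww}^T\cdot\11)[v] \;=\; \sum_{u\in V}\MM_{A,\ww}[u,v] \;=\; \sum_{u\in V}\ww[\pi_u(v)] \;=\; \centrality_{\Pi,\ww}[v],
\]
and since this holds for all $v$, the vector identity follows.

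I do not expect any real obstacle here: the statement is a direct consequence of matching the index conventions in the definitions of $\MM_{A,\ww}$ and $\centrality_{\Pi,\ww}$. The only point requiring a moment of care is bookkeeping --- confirming that $\pi_u(v)$ consistently denotes the rank (position in $[n]$) that node $v$ receives in the linear order $\V{\pi}_u\in\LV$, so that $\ww[\pi_u(v)]$ is the weight attached to that rank; this is already fixed by the preceding definitions, and once it is in place the computation above is immediate. It is worth remarking, as the surrounding text does, that this is the exact analogue for the preference model of the PageRank-conforming identity $\PPR_{{\WW,\alpha}}^T\cdot\11 = \PR{{\WW,\alpha}}$ of Proposition \ref{prop:basic1}: in both cases the natural centrality is recovered as the column-sum (equivalently, $\MM^T\11$) of the associated stochastic matrix.
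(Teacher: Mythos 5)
Your proof is correct and matches what the paper intends: the paper states this proposition without proof precisely because it is the immediate column-sum computation you carry out, with $\MM_{A,\ww}[u,v] = \ww[\pi_u(v)]$ read off from the definition of the rows. Nothing further is needed.
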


\subsubsection*{{\sc The Incentive Model}}

We now focus on a special family of incentive networks:
We assume for  $U = (V,\bvec{u})$ and $s\in V$:
\begin{enumerate}
\item   $u_s$ is monotonically non-decreasing, i.e., for all $T_1\subset T_2$,
  $u_s(T_1) \leq u_s(T_2)$.
\item  $u_s$ is normalized, i.e., $u_s(V\setminus \{s\}) = 1$.
\end{enumerate}

Each incentive network defines a natural cooperative network, 
  $H_{U} = (V,\V{\tau}_{SocialUtility})$:
For  any $S\subseteq V$, let the {\em social utility}
  of $S$ be:
\begin{eqnarray}
\V{\tau}_{SocialUtility}(S)  =  \sum_{s\in S} u_{s}(S\setminus \{s\})
\end{eqnarray}

The Shapley value \cite{ShapleyValue} --- a classical game-theoretical 
  concept --- 
  provides a natural centrality measure for cooperative networks.
\begin{definition}[Shapley Value]
Suppose $\V{\tau}$ is the characteristic 
  function of a cooperative game over $V =[n]$.
Recall that $\LV$  denotes the set of all permutations of $V$.
Let $S_{\V{\pi}, v}$ denotes the set of players preceding $v$ in a permutation $\V{\pi} \in \LV$.
Then, the {\em Shapley value} $\V{\phi}_{\V{\tau}}^{Shapley}[v]$  of a player $v\in V$ is:
\begin{eqnarray}\label{eqn:Shapley}
\V{\phi}_{\V{\tau}}^{Shapley}[v] = \expec{\V{\V{\pi}}\sim \LV}{\V{\tau}[S_{\V{\pi},v} \cup \{v\}] - 
  \V{\tau}[S_{\V{\pi},v}]}
\end{eqnarray}
\end{definition}

The Shapley value $\V{\phi}_{\V{\tau}}^{Shapley}[v]$  
  of player $v\in V$  is
   the  {\em expected marginal contribution} of $v$ 
   over the set preceding $v$  in a random permutation of the players.    
The Shapley value has many attractive properties, and 
 is widely considered to be the {\em fairest} measure of 
  a player's power index in a cooperative~game.

We can use Shapley values to define both the stochastic matrix and 
  the centrality of incentive networks $U$.
Let $\centrality_{U}$ be the Shapley value of the cooperative game
  defined by $\V{\tau}_{SocialUtility}$.
Note that the incentive network $U$
  also defines $|V|$ natural individual cooperative networks:
For each $s\in V$ and $T\subset V$, let:
\begin{eqnarray}
\V{\tau}_{s}(T)  =  
\left\{\begin{array}{ll}
u_s(T\setminus \{s\}) & \mbox{if $s\in T$}\\
0 & \mbox{if $s\not\in T$}
\end{array}
\right.
\end{eqnarray}

\begin{proposition}[The Markov Chain of Monotonic Incentive Model]
\label{prop:MCIC}
Suppose $U = (V,\bvec{u})$ is an incentive network over $V = [n]$,
 such that $\forall s\in V$,
 $u_s$ is monotonically non-decreasing and $u_s(V\setminus \{s\}) = 1$.
Let $\MM_{U}$ be the matrix in which for each $s\in V$, the
  $s^{th}$ row of $\MM_{U}$ is the Shapley value of
  the cooperative game with characteristic function $\V{\tau}_{s}$.
Then, $\MM_U$ defines a Markov chain and is centrality-conforming 
  with respect to $\centrality_{U}$, i.e., 
  (1) $\MM_U\11 = \11$ and (2) $\MM_U^T\11 = \centrality_{U}$.
\end{proposition}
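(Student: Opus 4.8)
The plan is to deduce everything from two standard axioms of the Shapley value, both visible directly from the expectation formula \eqref{eqn:Shapley}: \emph{efficiency} --- for any game $\V{\tau}$, $\sum_{v\in V}\V{\phi}_{\V{\tau}}^{Shapley}[v]=\V{\tau}(V)-\V{\tau}(\emptyset)$, which follows by telescoping the marginal contributions along any fixed permutation --- and \emph{additivity} --- $\V{\phi}_{\V{\tau}_1+\V{\tau}_2}^{Shapley}=\V{\phi}_{\V{\tau}_1}^{Shapley}+\V{\phi}_{\V{\tau}_2}^{Shapley}$, which is just linearity of expectation.

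First I would establish that $\MM_U$ is a stochastic matrix. For the row sums, fix $s\in V$; since $\V{\tau}_s(\emptyset)=0$ (as $s\notin\emptyset$) and $\V{\tau}_s(V)=u_s(V\setminus\{s\})=1$, efficiency gives that the $s$-th row of $\MM_U$, namely $\V{\phi}_{\V{\tau}_s}^{Shapley}$, sums to $1$; hence $\MM_U\11=\11$. For nonnegativity, I would check that every marginal contribution $\V{\tau}_s[S\cup\{v\}]-\V{\tau}_s[S]$ (with $v\notin S$) appearing in \eqref{eqn:Shapley} is nonnegative: if $s\notin S\cup\{v\}$ both terms vanish; if $s\in S$ the difference equals $u_s((S\setminus\{s\})\cup\{v\})-u_s(S\setminus\{s\})\ge 0$ by monotonicity of $u_s$; and if $v=s$ the difference equals $u_s(S)\ge u_s(\emptyset)\ge 0$, again by monotonicity together with the (natural) normalization $u_s(\emptyset)=0$. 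Thus all entries of $\MM_U$ are nonnegative, so $\MM_U$ is a Markov transition matrix.

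Next I would prove the centrality-conforming identity $\MM_U^T\11=\centrality_U$. The key observation is that the social-utility game is the pointwise sum of the $n$ individual games: for every $S\subseteq V$,
\[
\sum_{s\in V}\V{\tau}_s(S)=\sum_{s\in S}u_s(S\setminus\{s\})=\V{\tau}_{SocialUtility}(S),
\]
because $\V{\tau}_s(S)=0$ whenever $s\notin S$. Applying additivity of the Shapley value, for each $v\in V$,
\[
\centrality_U[v]=\V{\phi}_{\V{\tau}_{SocialUtility}}^{Shapley}[v]=\sum_{s\in V}\V{\phi}_{\V{\tau}_s}^{Shapley}[v]=\sum_{s\in V}\MM_U[s,v]=(\MM_U^T\11)[v],
\]
which is exactly claim (2).

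I do not anticipate a genuine difficulty: once efficiency and additivity are in hand the argument is pure bookkeeping. The only place that needs a moment's care is nonnegativity of the ``diagonal'' entries $\MM_U[s,s]=\V{\phi}_{\V{\tau}_s}^{Shapley}[s]$, which implicitly requires $u_s\ge 0$; I would make the normalization $u_s(\emptyset)=0$ explicit (it is harmless and natural, and combined with monotonicity it yields $u_s\ge 0$ on all of $2^{V\setminus\{s\}}$). A second, purely cosmetic point is that the efficiency axiom is usually quoted for games normalized by $\V{\tau}(\emptyset)=0$, which here holds automatically for each $\V{\tau}_s$.
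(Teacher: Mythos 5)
Your proof is correct and follows essentially the same route as the paper's: efficiency plus linearity (additivity) of the Shapley value, non-negativity of the marginal contributions via monotonicity, and the decomposition $\V{\tau}_{SocialUtility}=\sum_{s\in V}\V{\tau}_s$. Your one extra point of care --- that non-negativity of the entries implicitly needs $u_s(\emptyset)\ge 0$ (e.g.\ the normalization $u_s(\emptyset)=0$) --- is a legitimate detail the paper leaves unstated when it asserts that monotonicity alone yields non-negative Shapley entries.
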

\begin{proof}
This proposition is the direct consequence of two basic 
  properties of Shapley's beautiful characterization \cite{ShapleyValue}:
\begin{enumerate}
\item The Shapley value  is {\em efficient}:  
    $\sum_{v\in V} \phi_{\V{\tau}}[v] = \V{\tau}(V)$.
\item The Shapley value is  {\em Linear}: 
For any two characteristic functions $\V{\tau}$ and $\V{\omega}$, 
$\phi_{\V{\tau} + \V{\omega}} = \phi_{\V{\tau}} + \phi_{\V{\omega}}$.
\end{enumerate}
By the assumption  $u_s$ is monotonically non-decreasing,
  we can show that every entry of the Shapley value
  (as given by Eqn: (\ref{eqn:Shapley})) is non-negative.
Then, it follows from
   the efficiency of Shapley values and the assumption that
  $\forall s\in V, u_s(V\setminus \{s\}) = 1$,
  that $\MM_{U}$ is a stochastic matrix, and hence it
   defines a Markov chain.
Furthermore, we have:
\begin{eqnarray}
\V{\tau}_{SocialUtility} = \sum_{s\in V} \V{\tau}_s
\end{eqnarray}
Because $\centrality_{U}$ is the Shapley value of the 
  cooperative game with characteristic function 
$\V{\tau}_{SocialUtility}$,
  the {\em linearity} of the Shapley value then implies
  $\MM_U^T\11 = \centrality_{U}$, i.e., 
  $\MM_U$ is centrality-conforming with respect to $\centrality_{U}$.
\end{proof}

\subsubsection*{{\sc The Influence Model}}

Centrality-conforming Markov chain can also be
  naturally constructed for a family of powerset networks.
Recall from  Section \ref{sec:RichnessOther} that
  an influence process $\calD$ and social network $G=(V,E)$
  together define a powerset network, 
 $\V{P}_{G,\calD}: 2^V\times 2^V\rightarrow [0,1]$,
where for each $T \in 2^V$, 
 $\V{P}_{G,\calD}[S,T]$ 
 specifies the probability that $T$ is the final activated set 
  when $S$ cascades its influence through $G$.
As observed in \cite{ChenTeng}, the influence model also
  defines a natural cooperative game, whose
  characteristic function is the 
  influence spread function:
$$\V{\sigma}_{G,\calD}(S) = \sum_{T\subseteq V} |T|\cdot \V{P}_{G,\calD}[S,T], 
\quad \forall S\subseteq V.$$
Chen and Teng \cite{ChenTeng} proposed to use the Shapley value 
  of this social-influence game as a centrality measure
  of the powerset network defined by  $\V{P}_{G,\calD}$.
They showed that this social-influence centrality measure, to be denoted
  by $\centrality_{G,\calD}$,
  can be uniquely characterized by a set of five natrual axioms \cite{ChenTeng}.
Motivated by the PageRank Markov chain, they also
  constructed the following centrality-conforming Markov chain
  for social-influence models.

\begin{proposition}[Social-Influence Markov Chain]
Suppose $G = (V,E)$ is a social network and $\calD$
  is a social-influence process.
Let $\MM_{G,\calD}$ be the matrix in which for each $v\in V$,
  the $v^{th}$ row of $\MM_{G,\calD}$
  is given by the Shapley value
  of the cooperative game with the following 
  characteristic function:
\begin{eqnarray}
\V{\sigma}_{G,\calD,v}(S) = \sum_{T\subseteq V} \V{[} v\in T\V{]}\cdot 
  \V{P}_{G,\calD}[S,T]
\end{eqnarray}
where $\V{[} v\in T\V{]}$ is the indicator function for event ($v\in T$).
Then, $\MM_{G,\calD}$ defines a Markov chain and is 
  centrality-conforming with respect to $\centrality_{G,\calD}$, 
  i.e., (1) $\MM_{G,\calD}\11 = \11$ and (2) 
  $\MM_{G,\calD}^T\11 = \centrality_{G,\calD}$.
\end{proposition}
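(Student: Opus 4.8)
The plan is to mirror, almost verbatim, the structure of the proof of Proposition~\ref{prop:MCIC}, since the Social-Influence Markov Chain proposition has exactly the same shape: its rows are Shapley values of a family of characteristic functions $\V{\sigma}_{G,\calD,v}$, and the target centrality $\centrality_{G,\calD}$ is the Shapley value of the aggregate function $\V{\sigma}_{G,\calD} = \sum_{v\in V} \V{\sigma}_{G,\calD,v}$. So the two ingredients I would invoke are again (i) nonnegativity of the Shapley value on the relevant characteristic functions, which gives stochasticity of each row together with a normalization, and (ii) linearity of the Shapley value, which transfers the row-wise Shapley values into the aggregate Shapley value and yields the centrality-conforming identity $\MM_{G,\calD}^T\11 = \centrality_{G,\calD}$.

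First I would establish claim (1), $\MM_{G,\calD}\11 = \11$. The $v^{th}$ row is the Shapley value of $\V{\sigma}_{G,\calD,v}$, which assigns to a seed set $S$ the probability that $v$ ends up activated when $S$ cascades. This is a monotone set function (enlarging the seed set can only increase the activation probability of $v$ under the cascade model, by a standard coupling/monotonicity argument for IC and related models), so every marginal contribution $\V{\sigma}_{G,\calD,v}(S_{\ppi,w}\cup\{w\}) - \V{\sigma}_{G,\calD,v}(S_{\ppi,w})$ is nonnegative, hence each entry of the row is nonnegative. By efficiency of the Shapley value, the row sums to $\V{\sigma}_{G,\calD,v}(V)$, which equals the probability that $v$ is activated when the whole vertex set is the seed set --- and that is $1$, since $v\in V$ is activated at time $0$. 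Thus each row is a probability vector and $\MM_{G,\calD}$ is a stochastic matrix.

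Next I would establish claim (2). The key observation is the additive decomposition
\begin{eqnarray}
\V{\sigma}_{G,\calD}(S) = \sum_{T\subseteq V}|T|\cdot \V{P}_{G,\calD}[S,T] = \sum_{T\subseteq V}\sum_{v\in V}\V{[}v\in T\V{]}\cdot\V{P}_{G,\calD}[S,T] = \sum_{v\in V}\V{\sigma}_{G,\calD,v}(S),
\end{eqnarray}
where the middle step just rewrites $|T| = \sum_{v\in V}\V{[}v\in T\V{]}$ and swaps the order of summation. By linearity of the Shapley value, $\V{\phi}^{Shapley}_{\V{\sigma}_{G,\calD}} = \sum_{v\in V}\V{\phi}^{Shapley}_{\V{\sigma}_{G,\calD,v}}$. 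The left-hand side is exactly $\centrality_{G,\calD}$ by the definition of the social-influence centrality from \cite{ChenTeng}, and the right-hand side is the vector whose $u^{th}$ coordinate is $\sum_{v\in V}(\text{$u^{th}$ entry of row $v$ of }\MM_{G,\calD}) = (\MM_{G,\calD}^T\11)[u]$. Hence $\MM_{G,\calD}^T\11 = \centrality_{G,\calD}$, which is precisely the centrality-conforming property.

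The only nonroutine point --- the ``hard part'', though it is really just a matter of citing the right structural fact --- is the monotonicity of $\V{\sigma}_{G,\calD,v}$ in the seed set, which is needed for the nonnegativity step. For the independent cascade model (and the general triggering/threshold models of \cite{Kempe03}) this follows from the standard ``live-edge'' representation: conditioning on the random live-edge graph, $v$ is activated iff it is reachable from the seed set, and reachability is monotone in the seed set; then average over the live-edge graph. I would state this as the monotonicity assumption on $\calD$ (consistent with how the earlier influence-model discussion is phrased) rather than re-derive it. Everything else --- efficiency, linearity, the $|T|=\sum_v\V{[}v\in T\V{]}$ rewriting --- is immediate, so the proof is essentially a one-paragraph application of Shapley's axioms, exactly parallel to Proposition~\ref{prop:MCIC}.
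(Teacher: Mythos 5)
Your proposal is correct and follows essentially the same route as the paper: the paper's proof simply verifies that each $\V{\sigma}_{G,\calD,v}$ is monotonically non-decreasing with $\V{\sigma}_{G,\calD,v}(V)=1$ and then invokes the argument of Proposition~\ref{prop:MCIC} (efficiency plus linearity of the Shapley value applied to the decomposition $\V{\sigma}_{G,\calD}=\sum_{v\in V}\V{\sigma}_{G,\calD,v}$), which is exactly what you spell out. Your additional remarks on the live-edge justification of monotonicity and the explicit rewriting $|T|=\sum_{v\in V}\V{[}v\in T\V{]}$ only fill in details the paper leaves implicit.
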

\begin{proof}
For all $v\in V$, the characteristic function $\V{\sigma}_{G,\calD,v}$
  satisfies the following two conditions:
\begin{enumerate}
\item $\V{\sigma}_{G,\calD,v}$ is monotonically non-decreasing.
\item $\V{\sigma}_{G,\calD,v}(V) = 1$.
\end{enumerate}
The rest of the proof is essentially the same as the 
   proof of Proposition \ref{prop:MCIC}.
\end{proof}

\subsection{Networks Associated with Markov Chains}

The common feature in the Markovian formulations 
  of Section \ref{sec:CentralityConfirming} 
  suggests the possibility of a general
  theory that various network models beyond graphs can be 
  succinctly analyzed through the worldview of Markov chains.
Such analyses are forms of dimension reduction of network data ---
  the Markov chains
   derived, such as from social-influence instances, usually
   have lower dimensionality than the original network models.
In dimension reduction of data, inevitably some information is lost.
Thus, which Markov chain is formulated from a particular network model
  may largely depend on through which mathematical lens we are looking
  at the network data.
The Markovian formulations of Section \ref{sec:CentralityConfirming} are largely 
  based on centrality formulations.
Developing a more general Markovian formulation theory 
  of various network models remains the subject of future research.

But once we can reduce the network models specifying various aspects
  of network data 
  to a collection of Markov chains representing the corresponding 
  network facets, 
   we effectively reduce multifaceted network analysis 
  to a potentially simpler task --- the analysis of multilayer networks
  \cite{PorterMultilayeredNetwork,LermanTengYan}.
Thus, we can apply various emerging techniques for multilayer network analysis
\cite{HarvardMultilayer,YaleMultilayer,PaulChenMultilayer} 
  and network composition \cite{LermanTengYan}.
We can further use standard techniques
  to convert the  Markov chains into weighted graphs
  to examine these network 
  models through the popular graph-theoretical worldview.

\subsubsection*{{\sc Random-Walk Connection}}

Because of the following characterization, the random-walk is 
  traditionally the most commonly-used connection
  between Markov chains and weighted networks.

\begin{proposition}[Markov Chains and Networks: Random-Walk Connection]
For any directed network $G = (V,E,\WW)$ in which every node
  has at least one out-neighbor, there is a unique 
  transition matrix:
$$\MM_{\WW} = \left(\DD_{\WW}^{out}\right)^{-1}\WW$$
   that captures the (unbiased) random-walk Markov process on $G$.
Conversely, given a transition matrix $\MM$, 
  there is an infinite family of weighted networks 
  whose random-walk Markov chains are consistent with $\MM$.
This family is given by:
$$ \{\GGamma\MM: \mbox{$\GGamma$ is a positive diagonal matrix}\}.$$
\end{proposition}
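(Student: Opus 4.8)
The plan is to unwind the definition of the unbiased random walk on a weighted graph and then read off both claims directly. For the forward direction, I would first recall that the unbiased random walk on $G = (V,E,\WW)$ is the Markov process that, from a node $u$, moves to $v$ with probability proportional to the edge weight $w_{u,v}$. Since every node has at least one out-neighbor and, by the paper's convention, $w_{u,v} > 0$ exactly when $(u,v)\in E$, the out-degree $d^{out}_u = \sum_{v\in V} w_{u,v}$ is strictly positive for every $u$; hence $\DD_{\WW}^{out}$ is invertible. Normalizing the $u$-th row of $\WW$ by $d^{out}_u$ yields precisely the $u$-th row of $\left(\DD_{\WW}^{out}\right)^{-1}\WW$, and since the transition probabilities of the process are completely pinned down by this proportionality rule, the transition matrix exists, is unique, and equals $\MM_{\WW}$.

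For the converse, let $\MM$ be a transition matrix, so $\MM$ is non-negative with $\MM\cdot\11 = \11$. I would establish a two-way inclusion. First, for any positive diagonal $\GGamma = \diag(\gamma_1,\dots,\gamma_n)$, put $\WW = \GGamma\MM$. Then $\WW$ is non-negative, and its out-degree vector is $\WW\cdot\11 = \GGamma\MM\cdot\11 = \GGamma\cdot\11$, so $\DD_{\WW}^{out} = \GGamma$; in particular each row of $\WW$ sums to $\gamma_u > 0$, so every node has an out-neighbor and $\WW$ is a legitimate weighted network. Its random-walk transition matrix is $\left(\DD_{\WW}^{out}\right)^{-1}\WW = \GGamma^{-1}\GGamma\MM = \MM$, as required. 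Conversely, if $\WW$ is any weighted network in which every node has an out-neighbor and whose random-walk chain equals $\MM$, then from $\left(\DD_{\WW}^{out}\right)^{-1}\WW = \MM$ we obtain $\WW = \DD_{\WW}^{out}\MM$, and $\DD_{\WW}^{out}$ is a positive diagonal matrix, so $\WW$ lies in the claimed family with $\GGamma = \DD_{\WW}^{out}$.

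The argument involves no hard estimate, and the closest thing to an obstacle is the bookkeeping about what counts as an admissible weighted network. One should check that left-multiplying by $\GGamma$ preserves the ``every node has an out-neighbor'' requirement --- it does, because $\GGamma$ is strictly positive and each row of $\MM$ is a probability distribution with at least one positive entry --- and that the edge-set convention $E = \{(u,v): w_{u,v} > 0\}$ is respected. Since $\GGamma$ is diagonal and positive, $\GGamma\MM$ has exactly the same zero pattern as $\MM$, so the edge set is determined by $\MM$ alone and stays fixed across the whole family; once these consistency remarks are in place, both halves of the proposition follow from the one-line identities above.
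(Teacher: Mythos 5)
Your proof is correct. The paper states this proposition without providing a proof (it is presented as a well-known characterization), and your argument --- invertibility of $\DD_{\WW}^{out}$ from the out-neighbor hypothesis for the forward direction, and the two-way inclusion $\WW = \GGamma\MM \Leftrightarrow \GGamma = \DD_{\WW}^{out}$ for the converse --- is precisely the standard verification the author intends, with the remark about the zero pattern of $\GGamma\MM$ being a sensible extra check on the edge-set convention.
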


The most commonly-used diagonal scaling is $\PiPi$, the diagonal matrix
  of the stationary distribution.
This scaling is partially justified by the 
  fact that $\PiPi\MM$ is an undirected network
  if and only if  $\MM$ is a detailed-balanced Markov chain.
In fact in such a case, $\GGamma\MM$ is symmetric 
  if and only if there exists $c> 0$, $\GGamma = c\cdot \PiPi$.
Let's call $\PiPi\MM$ the {\em canonical Markovian network} of  
  transition matrix  $\MM$.
For a general Markov chain, we have:
\begin{eqnarray}
\11 \PiPi\MM = \V{\pi}^T \mbox{ and } \PiPi\MM\11 = \V{\pi}
\end{eqnarray}
Thus, although canonical Markovian networks are usually directed,
  their nodes always have the same in-degree and out-degree.
Such graphs are also known as the weighted Eulerian graphs. 

\subsubsection*{{\sc PageRank Connection}}

Recall that Theorem \ref{theo:PageRankEssence} features
  the derivation of 
   PageRank-conforming Markov chains from weighted networks.
%% For every directed network $G = (V,E,\WW)$ and restart constand $\alpha > 0$,
%%   there is a {\em unique} stochastic matrix, 
%% \begin{eqnarray}\label{eqn:pprMatrixSeriess}
%% \PPR_{{\WW,\alpha}}
%%  = \alpha \sum_{k=0}^{\infty} (1-\alpha)^k 
%% \cdot  \left(\left(\DD_{\WW}^{out}\right)^{-1}\cdot\WW\right)^k.
%%  \end{eqnarray} 
%% that (1) preserves the stationary distribution and spectra
%%  of the random-walk  Markov process and (2) is conformal to 
%% PageRank centrality of $G$.
In fact, Theorem \ref{theo:PageRankEssence} and its PageRank power series can be 
  naturally extended to any transition matrix $\MM$:
%\begin{proposition}
For any finite irreducible and ergodic Markov chain 
   $\MM$ and restart constant $\alpha > 0$,
  the matrix $\alpha \sum_{k=0}^{\infty} (1-\alpha)^k \cdot  \MM^k$
  is a stochastic matrix that preserves the detailed-balancedness,
   the stationary distribution, and the spectra of $\MM$.

Let's call  $\alpha \sum_{k=0}^{\infty} (1-\alpha)^k \cdot \PiPi \MM^k$
  the {\em canonical PageRank-Markovian network} of transition matrix $\MM$.
\begin{proposition}
For any Markov chain $\MM$, 
  the random-walk Markov chain of the 
  canonical PageRank-Markovian network
  $\alpha \sum_{k=0}^{\infty} (1-\alpha)^k \cdot \PiPi \MM^k$
   is conforming with respect to the PageRank of
   the canonical Markovian network $\PiPi\MM$.
\end{proposition}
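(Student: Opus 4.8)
The plan is to make everything explicit in terms of the series $\QQ \defeq \alpha \sum_{k=0}^{\infty}(1-\alpha)^k\MM^k$, which is the Markov-chain-level PageRank completion of $\MM$ discussed just above (and which, as noted there, is a well-defined stochastic matrix sharing the stationary distribution $\ppi$ of $\MM$). The canonical PageRank-Markovian network in the statement is then $\WWbar = \PiPi\QQ = \alpha\sum_{k=0}^{\infty}(1-\alpha)^k\PiPi\MM^k$, and I would argue in three short steps: (i) identify the random-walk transition matrix $\MM_{\WWbar}$ of $\WWbar$; (ii) write the PageRank centrality $\PR{{\PiPi\MM,\alpha}}$ of the canonical Markovian network $\PiPi\MM$ in closed form; and (iii) observe that $\MM_{\WWbar}^{T}\11$ equals this centrality vector.

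For step (i), since $\QQ$ is stochastic we get $\WWbar\11 = \PiPi\QQ\11 = \PiPi\11 = \ppi$, so the out-degree matrix of $\WWbar$ is exactly $\DD_{\WWbar}^{out} = \PiPi$. Hence the random walk on $\WWbar$ has transition matrix $\MM_{\WWbar} = (\DD_{\WWbar}^{out})^{-1}\WWbar = \PiPi^{-1}\PiPi\QQ = \QQ$; inverting $\PiPi$ is legitimate because $\ppi$ is positive (irreducibility). For step (ii), the canonical Markovian network $\PiPi\MM$ also has out-degree vector $\PiPi\MM\11 = \ppi$, so $(\PiPi\MM)^{T}(\DD^{out}_{\PiPi\MM})^{-1} = \MM^{T}\PiPi\PiPi^{-1} = \MM^{T}$, and the defining PageRank equation~(\ref{PR}) for the network $\PiPi\MM$ collapses to $\pp = \alpha\11 + (1-\alpha)\MM^{T}\pp$. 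Expanding this fixed point as a Neumann series exactly as in~(\ref{eqn:pprSeries}) with starting vector $\11$ yields $\PR{{\PiPi\MM,\alpha}} = \alpha\sum_{k=0}^{\infty}(1-\alpha)^k(\MM^{T})^k\11 = \QQ^{T}\11$.

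Combining (i) and (ii) gives $\MM_{\WWbar}^{T}\11 = \QQ^{T}\11 = \PR{{\PiPi\MM,\alpha}}$, which is precisely the asserted conforming property; along the way one verifies $\11^{T}\PR{{\PiPi\MM,\alpha}} = n$, consistent with the normalization fixed in Section~\ref{Sec:NE}. I do not expect a real obstacle here: the whole argument is bookkeeping around the identity $\QQ^{T}\11 = \alpha\sum_{k}(1-\alpha)^k(\MM^{T})^k\11$ together with the fact that forming the canonical Markovian network of a matrix and then extracting its random walk is the identity operation --- because the row sums of $\PiPi\MM$ are the stationary vector $\ppi$. The only points that merit a line of care are the convergence of the two infinite sums (guaranteed by $|1-\alpha| < 1$ together with the spectral radius of $\MM$ being $1$, which dominates any polynomial growth of $\|\MM^{k}\|$) and matching conventions: namely that ``conforming with respect to the PageRank of $\PiPi\MM$'' means $\MM_{\WWbar}^{T}\11 = \PR{{\PiPi\MM,\alpha}}$ in the sense of~(\ref{PR}), and that the PageRank normalization used throughout makes this an equality rather than a mere proportionality.
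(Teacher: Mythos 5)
Your argument is correct: the key observations that $\PiPi\QQ\,\11=\ppi$ (so the random walk on the canonical PageRank-Markovian network is exactly $\QQ=\alpha\sum_k(1-\alpha)^k\MM^k$) and that the PageRank equation for $\PiPi\MM$ collapses to $\pp=\alpha\11+(1-\alpha)\MM^T\pp$ with Neumann-series solution $\QQ^T\11$ give precisely the conforming identity $\MM_{\WWbar}^T\11=\PR{{\PiPi\MM,\alpha}}$. The paper states this proposition without proof, as an immediate consequence of the power-series identity behind Theorem \ref{theo:PageRankEssence}; your write-up supplies exactly the bookkeeping the paper leaves implicit, including the correct caveats about irreducibility (positivity of $\ppi$) and convergence.
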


\subsubsection*{{\sc Symmetrization}}

Algorithmically, computational/optimization problems on 
  directed graphs are usually harder than they are on undirected graphs.
For example, many recent breakthroughs in scalable graph-algorithm design
  are for limited to undirected graphs \cite{SpielmanTengLinear,SpielmanTengSpectralSparsification,SpielmanTengLocalClustering,AndersenChungLang,KoutisMillerPengII,KelnerLorenzoSidfordZhu,KelnerFlow,Sherman,PengFlow,USC}.
To express Markov chains as undirect networks, we can 
  apply the following well-known
  Markavian symmetrization formulation.
Recall a matrix $\M{L}$ is a {\em Laplacian matrix}
  if (1) $\M{L}$ is a symmetric matrix with non-positive off-diagonal entries,
  and (2) $\M{L}\cdot\11 = \00$.

\begin{proposition}[Canonical Markovian Symmetrization]\label{lem:MarkovianSymmetrization}
For any irreducible and ergodic finite Markov chain $\MM$:
\begin{eqnarray}
\PiPi - \frac{\PiPi\MM + \MM^T\PiPi}{2}
\end{eqnarray}
is a Laplacian matrix, where $\PiPi$ the diagonal matrix
  associated with $\MM$'s stationary distribution.
Therefore, $\frac{\PiPi\MM + \MM^T\PiPi}{2}$ is a
  symmetric network, whose degrees are normalized to stationary distribution
$\V{\pi} = \PiPi\cdot \11$.
When $\MM$ is detailed balanced, $\frac{\PiPi\MM + \MM^T\PiPi}{2}$ is 
  the {\em canonical Markovian network} of $\MM$.
\end{proposition}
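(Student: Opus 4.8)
The plan is to verify directly the two defining properties of a Laplacian matrix for $\M{L} \defeq \PiPi - \frac{\PiPi\MM + \MM^T\PiPi}{2}$, and then to read off the remaining assertions as immediate consequences of that verification.

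First I would check the symmetry and sign conditions. Since $\PiPi$ is diagonal it is symmetric, and $\bigl(\PiPi\MM + \MM^T\PiPi\bigr)^T = \MM^T\PiPi + \PiPi\MM$, so $\M{L}$ is symmetric. For $u \neq v$, the $(u,v)$-entry of $\M{L}$ equals $-\frac{\V{\pi}[u]\,\MM[u,v] + \V{\pi}[v]\,\MM[v,u]}{2}$, which is non-positive because $\PiPi$ and $\MM$ are entrywise non-negative; note that non-negativity alone suffices here, so the strict positivity of $\V{\pi}$ guaranteed by irreducibility is not needed for this step. This establishes property (1). Next I would verify $\M{L}\cdot\11 = \00$: the stochastic condition $\MM\11 = \11$ yields $\PiPi\MM\11 = \PiPi\11 = \V{\pi}$, while the stationarity identity $\MM^T\V{\pi} = \V{\pi}$ yields $\MM^T\PiPi\11 = \MM^T\V{\pi} = \V{\pi}$; hence $\M{L}\11 = \V{\pi} - \frac12\V{\pi} - \frac12\V{\pi} = \00$, which is property (2). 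It is precisely here that irreducibility and ergodicity enter, namely to guarantee that the stationary distribution $\V{\pi}$ (and thus $\PiPi$) is well defined.

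With $\M{L}$ shown to be a Laplacian matrix, the matrix $\frac{\PiPi\MM + \MM^T\PiPi}{2}$ is symmetric with non-negative entries, hence is the weight matrix of an undirected network, and its weighted-degree vector is $\frac{\PiPi\MM + \MM^T\PiPi}{2}\cdot\11 = \V{\pi} = \PiPi\cdot\11$, which is the stated normalization. Finally, when $\MM$ is detailed balanced, the fact quoted earlier in this excerpt --- that $\MM^T\PiPi$ is symmetric if and only if $\MM$ is detailed balanced --- gives $\MM^T\PiPi = \bigl(\MM^T\PiPi\bigr)^T = \PiPi\MM$, so $\frac{\PiPi\MM + \MM^T\PiPi}{2} = \PiPi\MM$, which is by definition the canonical Markovian network of $\MM$.

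I do not anticipate a real obstacle: every step is a one-line matrix identity. The only thing demanding a little care is deploying each hypothesis in its correct place --- the stochasticity $\MM\11 = \11$ for the summand $\PiPi\MM\11$, and the stationarity $\MM^T\V{\pi} = \V{\pi}$ for the summand $\MM^T\PiPi\11$ --- and observing that it is entrywise non-negativity, not strict positivity, that underlies the off-diagonal sign condition.
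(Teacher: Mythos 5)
Your proposal is correct and follows essentially the same route as the paper's own proof: verify $\M{L}\cdot\11=\00$ from $\MM\11=\11$ and $\MM^T\V{\pi}=\V{\pi}$, and note that $\frac{1}{2}(\PiPi\MM+\MM^T\PiPi)$ is symmetric and non-negative. Your write-up is simply more explicit about the off-diagonal sign condition and the detailed-balance case, which the paper leaves implicit.
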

\begin{proof}
We include a proof here for completeness.
Let $\ppi$ be the stationary distribution of $\MM$.
Then:
\begin{eqnarray*}
  \MM^T \ppi & = &  \ppi\\
  \PiPi \cdot \11 & = & \ppi  \\
  \MM\cdot \11 & = & \11
\end{eqnarray*}
Therefore:
\begin{eqnarray}
\left(\PiPi - \frac{\PiPi\MM + \MM^T\PiPi}{2}\right)\cdot \11
= \left(\ppi - \frac{\PiPi\11 + \MM^T\ppi}{2}\right)
= \00
\end{eqnarray}
The Lemma then follows from the fact that
$\frac{1}{2}(\PiPi\MM + \MM^T\PiPi)$ is symmetric and non-negative.
\end{proof}

%\end{proposition}
Through the PageRank connection,
  Markov chains also have two extended Markovian symmetrizations:
\begin{proposition}[PageRank Markovian Symmetrization]\label{lem:MarkovianSymmetrization}
For any irreducible and ergodic finite Markov chain $\MM$ 
  and restart constant $\alpha > 0$, the two
matrices below:
\begin{eqnarray}
\PiPi - \alpha \sum_{k=0}^{\infty} (1-\alpha)^k \cdot\frac{ 
\PiPi\MM^k + (\MM^T)^k\PiPi}{2} \\
\PiPi - 
\alpha \sum_{k=0}^{\infty} (1-\alpha)^k \PiPi\cdot 
  \left(\PiPi^{-1} \cdot \frac{\PiPi\MM + \MM^T\PiPi}{2}\right)^k
\end{eqnarray}
are both Laplacian matrices. 
Moreover, 
  the second Laplacian matrix is $\frac{1}{\alpha}$-spectrally similar 
  to $(1-\alpha)\cdot \left(\PiPi - \frac{\PiPi\MM + \MM^T\PiPi}{2}\right)$.
\end{proposition}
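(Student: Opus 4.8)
The plan is to realize each of the two matrices as the (unnormalized) Laplacian $\DD_{\WW'}-\WW'$ of an explicit symmetric, non-negative weighted graph on $V$ whose weighted-degree vector equals $\ppi = \PiPi\cdot\11$; once that is done, symmetry, non-positivity of the off-diagonal entries, and the identity $(\text{matrix})\cdot\11=\00$ are all automatic, so the matrix is Laplacian. For the second matrix the associated graph will turn out to be precisely the PageRank completion of the \emph{canonical Markovian symmetrization} $\tfrac12(\PiPi\MM+\MM^T\PiPi)$ of $\MM$, and the spectral claim will then be nothing more than Condition~6 of Theorem~\ref{theo:PageRankEssence} read off for that graph. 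First I would record the ingredients: since $\MM$ is irreducible and ergodic, $\ppi$ is strictly positive, so $\PiPi$ is a positive diagonal matrix; moreover $\MM\cdot\11=\11$ and $\MM^T\ppi=\ppi$, whence $(\MM^T)^k\ppi=\ppi$ and $\MM^k\cdot\11=\11$ for all $k\ge 0$; and every $\MM^k$ is entrywise non-negative with entries in $[0,1]$, so each series $\alpha\sum_{k\ge 0}(1-\alpha)^k(\,\cdot\,)$ below converges absolutely entrywise because $\alpha\sum_{k\ge 0}(1-\alpha)^k=1$, which also legitimizes moving the fixed matrices $\PiPi,\PiPi^{-1}$ in and out of the sums.

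For the first matrix, put $\WW_1 := \alpha\sum_{k\ge 0}(1-\alpha)^k\,\tfrac12\big(\PiPi\MM^k+(\MM^T)^k\PiPi\big)$. Each summand is of the form $\tfrac12(A+A^T)$ with $A=\PiPi\MM^k$ (using $(\PiPi\MM^k)^T=(\MM^T)^k\PiPi$), hence symmetric, and it is non-negative; thus $\WW_1$ is symmetric and non-negative. Moreover $\tfrac12(\PiPi\MM^k+(\MM^T)^k\PiPi)\cdot\11=\tfrac12(\PiPi\cdot\11+(\MM^T)^k\ppi)=\tfrac12(\ppi+\ppi)=\ppi$, so $\WW_1\cdot\11=\ppi$ and hence $\DD_{\WW_1}=\PiPi$. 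Therefore the first matrix equals $\PiPi-\WW_1=\DD_{\WW_1}-\WW_1=\LLL_{\WW_1}$, a Laplacian matrix.

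For the second matrix, set $\WW':=\tfrac12(\PiPi\MM+\MM^T\PiPi)$. It is symmetric (since $\PiPi$ is diagonal), non-negative, and $\WW'\cdot\11=\tfrac12(\PiPi\cdot\11+\MM^T\ppi)=\ppi$ --- this is exactly the content of the Canonical Markovian Symmetrization above --- so $\DD_{\WW'}=\PiPi$ and $N:=\PiPi^{-1}\WW'=\DD_{\WW'}^{-1}\WW'$ is the random-walk transition matrix $\MM_{\WW'}$ of the symmetric weighted graph $G'=(V,E',\WW')$; irreducibility of $\MM$ makes $G'$ connected, since the support of $\WW'$ is the undirected closure of the support of $\MM$. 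Applying Theorem~\ref{theo:PageRankEssence} to the symmetric graph $G'$, its PageRank completion $\Gbar_\alpha$ has weight matrix $\WWbar_\alpha=\DD_{\WW'}\cdot\PPR_{{\WW',\alpha}}=\PiPi\cdot\alpha\sum_{k\ge 0}(1-\alpha)^k N^k=\alpha\sum_{k\ge 0}(1-\alpha)^k\,\PiPi\big(\PiPi^{-1}\tfrac12(\PiPi\MM+\MM^T\PiPi)\big)^k$, which is precisely the matrix subtracted from $\PiPi$ in the second matrix of the proposition. Hence the second matrix equals $\PiPi-\WWbar_\alpha=\DD_{\WWbar_\alpha}-\WWbar_\alpha=\LLL_{\WWbar_\alpha}$, which is Laplacian by the Symmetry-Preserving and Degree-and-Stationary-Preserving conclusions of Theorem~\ref{theo:PageRankEssence}; and Condition~6 of that theorem, applied to $G'$, states that $\LLL_{\WWbar_\alpha}$ is $\tfrac1\alpha$-spectrally similar to $(1-\alpha)\LLL_{\WW'}=(1-\alpha)\big(\PiPi-\tfrac12(\PiPi\MM+\MM^T\PiPi)\big)$, which is the asserted similarity.

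I expect the only real obstacle to be the bridge in the last paragraph: one must check carefully that $\tfrac12(\PiPi\MM+\MM^T\PiPi)$ is the weight matrix of a legitimate connected, symmetric weighted graph with degree vector $\ppi$ --- positivity of $\PiPi$ (so that $\DD_{\WW'}^{-1}$ exists), symmetry, the correct degrees, and connectivity inherited from irreducibility --- so that Theorem~\ref{theo:PageRankEssence} applies to $G'$ verbatim, together with justifying the rearrangement $\PiPi\cdot\alpha\sum_k(1-\alpha)^k N^k=\alpha\sum_k(1-\alpha)^k\PiPi N^k$ by the absolute convergence noted earlier. Everything else is bookkeeping, and both conclusions then follow immediately from the already-established Theorem~\ref{theo:PageRankEssence}.
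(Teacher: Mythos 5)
Your argument is correct, and since the paper states this proposition without supplying a proof, yours fills the gap in exactly the intended way: the first matrix is the Laplacian of the symmetric non-negative matrix $\alpha\sum_{k\ge0}(1-\alpha)^k\tfrac12\bigl(\PiPi\MM^k+(\MM^T)^k\PiPi\bigr)$, whose row sums equal $\ppi$, and the second matrix is by construction the Laplacian of the PageRank completion of the canonical Markovian symmetrization $\tfrac12(\PiPi\MM+\MM^T\PiPi)$, so that both the Laplacian property and the $\tfrac1\alpha$-spectral similarity to $(1-\alpha)\bigl(\PiPi-\tfrac12(\PiPi\MM+\MM^T\PiPi)\bigr)$ are read off from Theorem~\ref{theo:PageRankEssence} applied to that symmetric graph. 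The points you flag as needing care (positivity of $\PiPi$ from irreducibility, the degree identity $\WW'\cdot\11=\ppi$, and interchanging $\PiPi$ with the absolutely convergent series) are indeed the only things to verify, and you verify them.
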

%% \begin{eqnarray}\label{eqn:pprMatrixSeriesss}
%% %\PPR_{{\MM,\alpha}} = 
%% \alpha \sum_{k=0}^{\infty} (1-\alpha)^k \cdot  \MM^k
%%  \end{eqnarray} 

%$\MM^T\PiPi$.
%% \begin{proposition}[Markov Chains and Networks: PageRank Connection]

%% Conversely, given a transition matrix $\MM$, 
%%   there is in fact  an infinite family of 
%%   weighted networks whose random-walk Markov process
%%   is  consistent with $\MM$: 
%% $ \{\MM^{T}\GGamma : \mbox{$\GGamma$ is a positive diagonal matrix}.\}$
%% \end{proposition}
%\subsection{Social Choice and Game-Theoretical View of Network Centrality}

\subsubsection*{{\sc Network Interpretations}} 
%\subsection{Network Interpretations}

We now return to Balcan {\em et al.}'s approach \cite{B3CT}
  for deriving preference networks from affinity networks.
Consider the following natural extension of linear orders
   to express rankings with ties:
An {\em ordered partition} of $V$ is a total order of a partition 
  of $V$.
Let $\blinear{V}$ denote the set of all  {\em ordered partitions} of $V$:
For a $\sigma \in \bLV$, for $i,j\in V$, we 
$i$ is {\em ranked strictly ahead of} $j$ if $i$ and $j$ belong to
different partitions, and the partition containing $i$ is ahead of the
partition containing $j$ in $\sigma$.
If $i$ and $j$ are members of the same partition in $\sigma$, we say
  $\sigma$ is {\em indifferent of} $i$ and $j$.

\begin{definition}[PageRank Preferences] \label{Interpretation}
Suppose $G = (V,E,\WW)$ is a weighted graph and 
  $\alpha >0$ is a restart constant.
For each $u \in V$, let $\V{\pi}_u$ be the ordered partition 
  according to the descending ranking of $V$ 
  based on the personalized PageRank 
  vector $\pp_u = \PPR_{{\WW,\alpha}}[u,:]$.
We call $\Pi_{\WW,\alpha} =\{\V{\pi}_u\}_{u\in V}$  
  the {\em PageRank preference  profile of $V$} with respect to $G$,
and $A_{\WW,\alpha} = (V,\Pi_{\WW,\alpha})$ the 
{\em PageRank preference network} of $G$.
\end{definition}

As pointed out in \cite{B3CT}, other methods for deriving preference networks
  from weighted networks exist.
For example, one can obtain individual preference rankings
  by ordering nodes according to shortest path distances, 
  effective resistances, or maximum-flow/minimum-cut values.

\begin{quote}
 {\em Is the PageRank preference a
  desirable personalized-preference profile
  of an affinity network?
 }
\end{quote}

This is a basic question in network analysis.
In fact, much work has been done.
I will refer readers to the beautiful axiomatic approach of 
Altman and Tennenholtz 
  for characterizing 
  personalized ranking systems \cite{AltmanTennenholtzPersonalized}.
Although they mostly studied unweighted networks, many of their results can be
  extended to weighted networks.
Below, I will use Theorem \ref{theo:PageRankEssence}
  to address the following question that I was asked when 
  first giving a talk about PageRank preferences.
\begin{quote}
{\em 
By taking the ranking information from PageRank matrices ---
  which is usually asymmetric --- one may lose valuable network information.
For example, when $G = (V,E,\WW)$ is a undirected network,
 isn't it desirable to define ranking information according
  to  a symmetric matrix?
}
\end{quote}

At the time, I was not prepared to answer this question and replied
  that it was an excellent point.
Theorem \ref{theo:PageRankEssence} now provides an answer.
Markov chain theory uses an elegant concept to characterize
  whether or not a Markov chain $\MM$ has an undirected network realization.
Although Markov-chain transition matrices are usually asymmetric,
  if a Markov chain is detailed-balanced,
  then its transition matrix $\MM$ can be 
  diagonally scaled into a symmetric matrix by its stationary distribution.
Moreover, $\M{\Pi}\MM$ is the ``unique'' underlying undirected network
  associated with $\MM$.
By Theorem \ref{theo:PageRankEssence}, 
  $\PPR_{{\WW,\alpha}}$ is a Markov transition matrix with 
  stationary distribution $\DD_{\WW}$, and thus,
  $\WWbar_{\alpha} = \DD_{\WW}\cdot \PPR_{{\WW,\alpha}}$
  is symmetric if and only if $\WW$ is symmetric.
Therefore, because the ranking given by $\pp_u$ is the same as the ranking 
  given by $\WWbar[u,:]$, the PageRank preference profile
  is indeed derived from a symmetric matrix when $\WW$ is symmetric.

We can also define clusterability and other network models based on
  personalized PageRank matrices.
For example:
\begin{itemize}
\item {\bf PageRank conductance}:
\begin{eqnarray} \label{eqn:Conductance2}
\pagerankconductance_{\WW}(S) :=
\frac{\sum_{u\in S,v\not\in S} \WWbar[u,v]}
{\min\left(\sum_{u\in S,v\in V} \WWbar[u,v], \sum_{u\not\in S,v\in V} \WWbar[u,v]\right)}
\end{eqnarray}

\item {\bf PageRank utility}:
\begin{eqnarray} 
\pagerankutility_{\WW}(S) := \sum_{u\in S,v\in S} \PPR_{{\WW,\alpha}}[u,v]
\end{eqnarray}

\item {\bf PageRank clusterability}:
\begin{eqnarray} 
\pageranclusterability_{\WW}(S) := \frac{\pagerankutility_{\WW}(S)}{|S|}
%   = \frac{\sum_{u\in S,v\in S} \PPR_{{\WW,\alpha}}[u,v]}{|S|}
\end{eqnarray}
\end{itemize}
Each of these functions defines a cooperative network
    based on $G = (V,E,\WW)$.
These formulations are connected with the PageRank of $G$. 
For example,  the Shapley value of the cooperative network given by 
  $\V{\tau} = \pagerankutility_{\WW}$ is the PageRank of $G$.

$\PPR_{{\WW,\alpha}}$ can also be used to define incentive and powerset
  network models.
The former can be defined by 
  $u_s(T) = \sum_{v\in T} \PPR_{{\WW,\alpha}}[s,v]$,
   for $s\in V, T\subset V$ and $s\not\in T$.
The latter can be defined by
%\begin{eqnarray}
$\V{\theta}_{\WW}(S,T) = \frac{\sum_{u\in S,v\in T} \PPR_{{\WW,\alpha}}[u,v]}{|S|}$
%\end{eqnarray}
for $S,T\subseteq V$. 
$\V{\theta}_{\WW}(S,T)$ measures the {\em rate} of PageRank contribution
   from $S$ to $T$.

\subsection{Multifaceted Approaches to Network Analysis: Some Basic Questions}

We will now conclude this section with a few basic questions,
  aiming to study how structural concepts in one network model
  can inspire structural concepts in other network models. 
A broad view of network data
  will enable us to comprehensively examine 
  different facets of network data,
  as each network model brings out different aspects
  of network data.
For examples, 
  the metric model is based on geometry, 
   the preference model is inspired by  social-choice theory \cite{ArrowBook}, 
   the incentive and cooperative models are based on
   game-theoretical and economical principles 
  \cite{NashNonCooperative,NAS50,Shapley},
  the powerset model  is motivated by social influences 
  \cite{DomingosRichardson,RichardsonDomingos,Kempe03},
  while  the graphon \cite{Graphon} is based on graph limits and statistical
 modeling.
We hope that addressing questions below will help us to 
  gain comprehensive and comparative understanding of 
  these models and the network structures/aspects
    that these models may reveal.
We believe that multifaceted and multimodal approaches to network analysis
  will become increasingly more essential 
  for studying major subjects in network science.

\begin{itemize}
\item 
How should we formulate  
  {\em personalized centrality measures} with respect
  to other commonly-used network centrality measures
  \cite{NewmanBook,Bonacich1987power,Freeman,FreemanBetweenness,BorgattiCentrality,BorgattiEverett,EverettBorgattiGroup,BonacichGroupCentrality,Faust,Katz,BavelasCloseness,SabidussiCloseness,PercolationCentrality,Donetti2005Entangled,ShapleyValueForCentrality1,ShapleyValueForCentrality2}?
Can they be used to define meaningful
   centrality-conforming Markov chains?
\item 
How should we define centrality measures and 
  personalized ranking systems
  for general incentive or powerset networks?
How should we define personalized Shapley 
  value for cooperative games?
How should we define weighted networks from cooperative/incentive/powerset
  models?

\item 
What are natural Markov chains associated with the probabilistic graphical models
\cite{KollerFriedman}?
How should we define centrality and clusterability for 
  this important class of network models that 
  are central to statistical machine learning?

\item What constitutes a community in a probabilistic graphical model?
  What constitutes a community in a
   cooperative,  incentive, preference, and powerset network?
How should we capture network similarity 
  in these models?
How should we integrate them if they represents different facets of network data?

\item 
How should we evaluate different clusterability measures and their
  usefulness to community identification %in social networks
  or clustering? % in information networks?
For example, PageRank conductance and PageRank clusterability are
  two different subset functions, but the latter applies to directed networks.
How should we define clusterability-conforming centrality
  or centrality-forming  clusterability?

\item What are limitations of Markovian worldview of various network models?
What are other unified worldview models for multifaceted network data?

\item What is the fundamental difference between ``directed''
  and ``undirected'' networks in various models?
 
%item Modeling for noisy networks

\item How should we model networks with non-homogeneous nodes and edge types?
\end{itemize}

More broadly, the objective is to build 
  a systematic algorithmic framework for understanding
  multifaceted network data, particular given that many natural network 
  models are highly theoretical in that their 
  complete-information profiles have exponential dimensionality
  in $|V|$.
In practice, they must be succinctly defined.
The algorithmic network framework consists of the complex and challenging
 tasks of integrating  sparse and 
  succinctly-represented multifaceted network data $N = (V, F_1,...,F_k)$ 
  into an effective worldview $(V,W)$ 
  based on which, one can effectively build succinctly-represented
  underlying models for network facets,
  analyzing the interplay between network facets, and 
  identify network solutions that are consistent
   with the comprehensive network data/models.
What is a general model 
  for specifying multifaceted network data?
How should we formulate the problem
  of {\em network composition} for multifaceted network data?

\section{To Jirka}

The sparsity, richness, and ubiquitousness
 of multifaceted networks data
 make them wonderful subjects
 for mathematical and algorithmic studies.
Network science has truly become a ``universal discipline,'' 
  with its multidisciplinary roots and  interdisciplinary presence.
However, it is a fundamental and conceptually challenging task to
  understand network data, due to the vast network phenomena.

\begin{quote}
{\em The holy grail of network science is to understand 
 the network essence that underlies the observed 
sparse-and-multifaceted network data.
}
\end{quote}

We need an analog of the concept of range space, which 
  provides a united worldview of a 
  family of diverse problems that are fundamental in statistical 
  machine learning, geometric approximation, and data analysis.
I wish that I had a chance to discuss with you about 
  the mathematics of networks --- beyond just the geometry of graphs ---
  and to learn from your  brilliant insights
  into  {\em the essence of networks}.
You and your mathematical depth and clarity will be greatly missed, Jirka.

\bibliographystyle{abbrv}
\bibliography{scalable}
\end{document}